%
\documentclass[copyright,creativecommons]{eptcs}

\usepackage{underscore}  

\usepackage[T1]{fontenc}
\usepackage[scaled]{}
\usepackage[english]{babel}
\usepackage[utf8]{inputenc}
\usepackage{listings}
\usepackage{mathrsfs, amsfonts, amsmath, amssymb, latexsym}
\usepackage{syntax}
\usepackage{multicol}
\usepackage{color}
\usepackage{graphicx}
\usepackage{graphics}
\usepackage{listings}
\usepackage{stmaryrd}
\usepackage{galois}
\usepackage[ruled, linesnumbered]{algorithm2e}
\usepackage{capt-of}
\usepackage{makecell}
\usepackage{framed}
\usepackage{float}
\usepackage{mathbbol}

\usepackage{graphicx}
\usepackage{subcaption}
\captionsetup{compatibility=false}
\captionsetup[table]{belowskip=1pt,aboveskip=10pt}

\usepackage{caption}

\usepackage{pgf}
\usepackage{tikz, wrapfig,fancybox}
\usetikzlibrary{arrows,automata}
\usepackage[export]{adjustbox}

\makeatletter
\newenvironment{CenteredBox}{%
	\begin{Sbox}}{
	\end{Sbox}\centerline{\parbox{\wd\@Sbox}{\TheSbox}}}
\makeatother

\makeatletter
\@ifpackageloaded{stix}{%
}{%
	\DeclareFontEncoding{LS2}{}{\noaccents@}
	\DeclareFontSubstitution{LS2}{stix}{m}{n}
	\DeclareSymbolFont{stix@largesymbols}{LS2}{stixex}{m}{n}
	\SetSymbolFont{stix@largesymbols}{bold}{LS2}{stixex}{b}{n}
	\DeclareMathDelimiter{\lBrace}{\mathopen} {stix@largesymbols}{"E8}%
	{stix@largesymbols}{"0E}
	\DeclareMathDelimiter{\rBrace}{\mathclose}{stix@largesymbols}{"E9}%
	{stix@largesymbols}{"0F}
}
\makeatother

\SetKwInput{KwData}{Input}
\SetKwInput{KwResult}{Output}

\lstdefinelanguage{JS}
{ keywords={var, function, typeof, this, undefined, parseInt},
	otherkeywords={},
	basicstyle=\fontsize{10pt}{12pt}\selectfont\ttfamily,
	keywordstyle=\bfseries\color{blue},
	sensitive=false,
	commentstyle=\color{purple!40!black},
	showspaces=false,
	tabsize=1,
	literate= {~}{\texttt{\phantom{m}}}1 {`}{$\texttt{\%}$}1 {?}{$\texttt{\$}$}1, 
	showstringspaces=false,emph={3}{\tiny }
	showtabs=true,
	morecomment=[l]{//},
	morecomment=[s]{/*}{*/},
	morestring=[b],
	breaklines=true,
	breakindent=12pt
}

\lstset{language=JS,  
	numbers=none,
	numberstyle=\color{blue!20!black!30!green}\tiny\ttfamily,
	belowskip=1em,	
	aboveskip=1em
}

\definecolor{dkred}{rgb}{.6,0,0}
\definecolor{dkgreen}{rgb}{0,.5,0}
\definecolor{dkblue}{rgb}{0,0,.6}
\definecolor{dkyellow}{cmyk}{0,0,.8,.3}
\definecolor{lightgray}{rgb}{.95,.95,.95}
\definecolor{darkgray}{rgb}{.3,.3,.3}
\definecolor{darkblue}{rgb}{0,0,.20}
\definecolor{purple}{rgb}{0.65, 0.12, 0.82}

\newif\ifdraft\drafttrue


\newcommand{\green}[2]{{\color{dkgreen}(#1: #2)}}
\newcommand{\xremark}[2]{{\color{dkred}(#1: #2)}}
\newcommand{\blu}[2]{{\color{purple}(#1: #2)}}
\ifdraft
\newcommand{\va}[1]{\xremark{VA}{#1}}
\newcommand{\im}[1]{\green{IM}{#1}}
\newcommand{\del}[1]{\blu{VA}{#1}}
\else
\newcommand{\va}[1]{}
\newcommand{\im}[1]{}
\newcommand{\del}[1]{}
\fi

\newcommand{\red}[1]{{\color{dkred}#1}}

\newcommand{\sset}[2]{\left\{~#1  \left |
	\begin{array}{l}#2\end{array} \right.\!\right\}}

\newcommand{\CAbs}{\Upsilon}


\newcommand{\cstr}{\mathbb{S}}

\newcommand{\cint}{\mathbb{Z}}
\newcommand{\imp}{\CommS}

\renewcommand{\exp}{\mbox{\sf e}}

\newcommand{\bexp}{\mbox{\sf b}}

\newcommand{\prog}{{\mbox{\sf P}}}

\newcommand{\concat}[2]{\mbox{\tt concat($#1$,$#2$)}}

\newcommand{\true}{{\tt true}}
\newcommand{\false}{{\tt false}}

\newcommand{\rhoimp}{\rho_{\scriptscriptstyle \mathcal{CS}}}
%

\newcommand{\tuple}[1]{\langle #1\rangle}

\newcommand{\lab}{\mathsf{lab}}

\def\rarr#1{\mbox{\raisebox{0ex}[1ex][1ex]{$
			\mathrel{\mathop{
					\hspace*{1pt}\longrightarrow\hspace*{1pt}}\limits^{\,_{\mbox{\tiny 
							\hspace*{-2.2pt}#1}}}}$}}}
\def\ok#1{\mbox{\raisebox{0ex}[1ex][1ex]{$#1$}}} 
\newcommand{\ra}{\rightarrow}

\newcommand{\Lra}{\Leftrightarrow}
\newcommand{\lra}{\longrightarrow}
\newcommand{\uco}{\mbox{\it uco}}
\def\defi{\mbox{\raisebox{0ex}[1ex][1ex]{$\stackrel{\mbox{\tiny
					def}}{\; =\;}$}}}

\def\Sign{\mathsf{Sign}}

\def\grasse#1#2{[\![#1]\!]^{#2}}

\def\defemb#1#2{\expandafter\def\csname #1\endcsname
	{\relax\ifmmode #2\else\hbox{$#2$}\fi}}

\def\cS{\mathcal{S}}
\def\cL{\mathcal{L}}

\def\cP{\mathcal{P}}

\newcommand{\rhos}{\rho_{\mbox{\tiny\sl S}}}
\newcommand{\sval}[2]{\cS^{#1}(#2)}

\newcommand{\alphabet}{\mathcal{K}}


\newcommand{\cval}{\mathbb{V}}

\newcommand{\Exp}{{\sf E}}


\newcommand{\rMem}[1]{\Mem^{#1}}

\newcommand{\code}[1]{\mbox{\lstinline@#1@}}

\newcommand{\interp}[1]{\grasse{#1}}

\newcommand{\Lab}[1]{\mbox{\sl Lab$_{#1}$}}
\newcommand{\ccc}[1]{}
\newcommand{\CommS}{\textsf{Imp}}
\def\tG{\mathtt{G}}

\def\aetG{\tG^\eta}
\newcommand{\edges}{\mbox{\sl Edges}}
\newcommand{\aedges}{\mbox{\sl Edges}^\#}
\newcommand{\nodes}{\mbox{\sl Nodes}}

\def\CFG{\mathsf{CFG}}
\def\aCFG{\mathsf{CFG}^\#}

\newcommand{\ppl}[1]{{{}^{\red{\ell_{#1}}}}}
\def\skipc{\mbox{\bf skip}}
\def\Exp{\mbox{\tt e}}
\def\Aexp{\mbox{\tt a}}
\def\Bexp{\mbox{\tt b}}
\def\Sexp{\mbox{\tt s}}
\def\Comm{\mbox{\tt c}}
\newcommand{\CommG}{\Psi}
\def\tl{\mathtt{l}}
\def\Mem{\mathbb{M}}

\newcommand{\mem}{\mathbb{m}}

\newcommand{\rmem}[1]{\mem^{#1}}
\newcommand{\Var}{\ensuremath{\textsf{Var}}}
\newcommand{\Int}{\ensuremath{\mathbb{Z}}}
\newcommand{\Bool}{\ensuremath{\mathsf{Bool}}}
\def\grasseb#1{{\llparenthesis\hspace*{0.2ex} #1 \hspace*{0.2ex}\rrparenthesis}}
\def\grasse#1{{\llbracket #1 \rrbracket}}
\def\P{\tt P}
\newcommand{\AexpS}{\mbox{AExp}}
\newcommand{\BexpS}{\mbox{BExp}}
\newcommand{\ExpS}{\mbox{Exp}}

\newcommand{\integer}{\mathbb{Z}}

\def\sep{;}

\def\ov{\overline}

\def\Sexp{{\sf s}}
\DeclareMathOperator{\SexpS}{SExp}
\newcommand{\mstr}[1]{\mbox{\tt "$#1$"}}
\newcommand{\subst}[3]{\code{substr}(#1,#2,#3)}
\def\reflect{\code{eval}}
\DeclareMathOperator{\Ccomms}{Comm}
\def\tc{\mathtt{c}}

\def\tG{\mathtt{G}}

\newcommand{\heta}{\widehat{\rho}}

\newtheorem{theorem}{Theorem}[section]
\newtheorem{proposition}[theorem]{Proposition}
\newtheorem{example}[theorem]{Example}
\newtheorem{corollary}[theorem]{Corollary}

\newtheorem{lemma}[theorem]{Lemma}

\newenvironment{proof}{\noindent {\sc Proof.~}}{\hfill $\Box$\newline\smallskip\mbox{}\unskip~~\normalsize}

%

\begin{document}
\title{Improving Dynamic Code Analysis by Code Abstraction}
\author{Isabella Mastroeni
\institute{Department of Computer Science, University of Verona (Italy)}
\email{isabella.mastroeni@univr.it}
\and
Vincenzo Arceri
\institute{Department of Environmental Sciences, Informatics and Statistics,\\ Ca' Foscari University of Venice (Italy)}
\email{vincenzo.arceri@unive.it}
}
\def\titlerunning{Improving Dynamic Code Analysis by Code Abstraction}
\def\authorrunning{I. Mastroeni \& V. Arceri}
%

\maketitle              
\begin{abstract}
In this paper, our aim is to propose a model for code abstraction, based on abstract interpretation, allowing us to improve the precision of a recently proposed static analysis by abstract interpretation of dynamic languages. The problem we tackle here is that the analysis may add some spurious code to the string-to-execute abstract value and this code may need some abstract representations in order to make it analyzable. This is precisely what we propose here, where we drive the code abstraction by the analysis we have to perform.

\end{abstract}

\section{Introduction}
The possibility of dynamically building code instructions as the result of text manipulation is a key aspect in dynamic programming languages. In this scenario, programs can turn text, which can be built at run-time, into executable code \cite{RichardsHBV11}. These features are often used in code protection and tamper-resistant applications, employing camouflage for escaping attack or detection \cite{DMavrogiannopoulosKP11}, in malware, in mobile code, in web servers, in code compression, and in code optimization, e.g., in Just-in-Time (JIT) compilers, employing optimized run-time code generation.\\
While the use of dynamic code generation may simplify considerably the {\em art and performance of programming\/}, this practice is also highly dangerous, making the code prone to unexpected behaviors and malicious exploits of its dynamic vulnerabilities, such as code/object-injection attacks for privilege escalation, database corruption, and malware propagation. It is clear that more advanced and secure functionalities based on string-to-code statements could be permitted if we better master how to safely generate, analyze, debug, and deploy programs that dynamically generate and manipulate code.

There are lots of good reasons to analyze programs building strings that can be later executed as code. 
An interesting example is code obfuscation. Recently, several techniques have been proposed for JavaScript code obfuscation\footnote{\url{https://www.daftlogic.com/projects-online-javascript-obfuscator.htm},\\\url{http://www.danstools.com/javascript-obfuscate/},\\\url{http://javascript2img.com/},\\\url{https://javascriptobfuscator.herokuapp.com/},\\\url{https://javascriptobfuscator.com/}}, 
meaning that also client-side code protection is becoming an increasingly important problem to be tackled by the research community and by practitioners. Hence, it is not always possible to simply ignore \code{eval} without accepting to lose the possibility of analyzing the rest of the program \cite{tops20}.
\paragraph*{The Context: Analyzing Dynamic Code.}
A major problem in presence of dynamic code generation is that static analysis becomes extremely hard if not  impossible. This happens because program data structures, such as the control-flow graph and the system of recursive equations associated with the program in question, are themselves dynamically mutating objects. 
Recently \cite{tops20}, the problem of analyzing dynamic code has been tackled by {\em treating code as any other dynamic structure that can be statically analyzed by abstract interpretation, and to treat the abstract interpreter as any other program function that can be recursively called}.
In particular, in \cite{tops20}, we provide a static analyzer architecture for a core dynamic language, containing non-removable \code{eval} statements, that still has some limitation in terms of precision but  
provides the necessary ground for studying more precise solutions to the problem. In particular,
\begin{itemize}
\item[$\bullet$] We have designed an automata-based string abstract domain \cite{mdpi} for analyzing string values during execution. Automata (FA) provide the perfect choice for abstracting strings that may be executed by \code{eval} since they allow us to over-approximate the set of possible values of string variables by keeping enough information for both analyzing properties of string variables that are never executed by an \code{eval} during computation and for extracting the potential executable sub-language. 
\item[$\bullet$] In order to statically analyze the code potentially executed by an \code{eval}, we have designed a systematic process for extracting from the (abstract) argument of \code{eval} (i.e., from the FA collection of its potential arguments) an over-approximation of executable code that this collection contains. Clearly, this approximation must keep a form that the analyzer can interpret.
\item[$\bullet$] We designed a static analyzer for dynamic languages performing a recursive call of the interpreter on the (over-approximated) code that \code{eval} may execute.
\end{itemize}
\paragraph*{The Problem: Improve Precision Analysis by Abstracting Code.}
This analysis provides a first step towards the analysis of dynamic languages but still has some important precision loss \cite{tops20}.
In particular, there are particular forms of FA (which occur when the string is dynamically generated by loops) avoiding the possibility of generating a control flow graph ($\CFG$) able to approximate the code executed by an \code{eval}.
%
%
For instance, when the FA accepts a language such as $\sset{\code{x=(5)}^n\code{;}}{n > 0}$, the analysis in \cite{tops20} cannot extract, from the FA, the $\CFG$ approximating the \code{eval} argument. 
In order to better explain the problem, consider the code in Fig.~\ref{fig:limit}, where the value of $\code{i}$ is statically unknown. 
In Fig.~\ref{fig:limit}, we draw the automaton $A$ representing the abstract value of {\tt str} before the $\code{eval}$ execution. The problem is that $A$ has a cycle not involving a whole statement~\cite{tops20}. This situation makes the analyzer unable to build a $\CFG$ over-approximating the code potentially executed since, intuitively, such a $\CFG$ should be infinite. Indeed, only an infinite $\CFG$ could capture all the possible assignments described by the FA, namely all the assignments of any possible number formed only by ${\tt 5}$ to the variable ${\tt x}$ (i.e., {\tt x=5;},{\tt x=55;},{\tt x=555;}$\dots$). \\
\begin{figure}[t]
	\begin{subfigure}[b]{0.5\textwidth}
		\begin{CenteredBox}
			\begin{lstlisting}[basicstyle=\fontsize{10}{10}\selectfont\ttfamily,escapeinside={(*}{*)}]
str = "x=5";
while (i < 3) {
	str = str + "5";
	i = i + 1;
}
str = str + ";"; eval(str);	
			\end{lstlisting}
		\end{CenteredBox}
	\end{subfigure}~
	\begin{subfigure}[b]{0.5\textwidth}
		\centering
		\begin{adjustbox}{scale=0.8} 
			\begin{tikzpicture}[->,>=stealth',shorten >=1pt,auto,node distance=1.7cm, semithick]
			\node[initial,state,scale=0.6, initial text =] 			 	 (A)                    {};
			\node[state,scale=0.6]      			 			 (B) [right of=A] 		{};
			\node[state,scale=0.6]        			 		(C) [right of=B] 		{}; 		
			\node[state,scale=0.6]        			 		(D) [right of=C] 		{}; 		
			\node[state,scale=0.6, accepting]        			 		(E) [right of=D] 		{}; 
			
			\path[->] (A) edge node {\code{x}} (B);
			\path[->] (B) edge node {\code{=}} (C); 	
			\path[->] (C) edge node {\code{5}} (D); 		
			\path[->] (D) edge  [loop above] node {\code{5}} (D); 	
			\path[->] (D) edge node {\code{;}} (E);
			\end{tikzpicture}
		\end{adjustbox}
	\end{subfigure}
	\caption{$A$ s.t. $\cL(A) = \{\mathtt{x=5}^n\mathtt{;}~|~n > 0\}$, where $\mathtt{5}^n$ means $\mathtt{5}$ repeated $n$ times.}
	\label{fig:limit}
\end{figure}
In order to make it possible to overcome this limitation, at least for a set of potential \code{eval} patterns, we propose to define a form of {\em abstract $\CFG$} able to finitely represent a potential infinite set of $\CFG$s, e.g., we look for  a $\CFG$ representing {\tt x=5$^*$}.\\ Unfortunately, things are not so easy as it may seem, since this abstract code representation has to be built in such a way that the analyzer may still be able to interpret it. 
\paragraph*{Contribution.} 
The main contributions for tackling the problem above are:
\begin{itemize}
\item[$\bullet$] We first define the notion of {\em abstract $\CFG$}, based on the idea of making it possible to still perform a given analysis. The idea is to leave the control structure unchanged while approximating the edge labels (the statements to execute) to sets of labels, i.e., those sharing a fixed abstract property.
\item[$\bullet$] We show how completeness of code abstraction w.r.t.\ the semantic observation models the possibility, for the static analyzer, of interpreting also the abstract code, and we show how we can make any code abstraction complete. 
\item[$\bullet$] We provide a systematic approach, based on the one proposed in \cite{tops20}, allowing us to analyze also the \code{eval} patterns described above, for which, instead, the analysis in \cite{tops20} loses precision.
\end{itemize}

\section{The Core Language: $\imp$}\label{sect:bg}
The language is quite standard (see Fig.~\ref{synta}\footnote{We use $n$ to denote the semantic value corresponding to the syntactic symbol $\code{n}$.}), and each statement is annotated with a label $\ell \in \Lab{}$ (not part of the syntax) corresponding to the statement program point\footnote{
We suppose that there exists a function that, taken a well-written program, can label it with a fresh label for each program point.}.
\begin{figure}[t]
\hspace{-2cm}
{\small  
  \begin{align*}
  \ExpS\ni \Exp &::= \; \Aexp\mid \Sexp\\
   \AexpS \ni \Aexp  &::= \; \code{x} \mid \code{n} \mid 
    \Aexp + \Aexp \mid \Aexp -\Aexp \mid \Aexp*\Aexp\\
    \BexpS \ni \Bexp &::= \; \code{x} \mid\true\mid\false \mid \Exp=\Exp \mid 
    \Exp > \Exp \mid \Exp < \Exp \mid \Bexp\wedge\Bexp \mid \neg {\tt b} \\
    \SexpS\ni \Sexp &::= \; \code{x}  \mid \mstr{\sigma} \mid \concat{\Sexp}{\Sexp} \mid
   \subst{\Sexp}{\Aexp}{\Aexp}\ \\
     \Ccomms\ni\Comm &::= \;\ppl{1}\skipc\ppl{2}\mid \ppl{1}\code{x:=e}\ppl{2}\mid 
     \ppl{1}\Comm;\ppl{2}\Comm\ppl{3}\mid \ppl{1}\code{if} ~({\tt b})~\{\ppl{2}\Comm\ppl{3}\}~\{\ppl{4}\Comm\ppl{5}\}\ppl{6} \\ & \mid\ 
     \ppl{1}\code{while} ~({\tt b})~ \{\ppl{2}\Comm\ppl{3}\}\ppl{4}\mid \ppl{1}\reflect(\Sexp)\ppl{2} \\
    \CommS\ni \P &::=\ppl{\iota}\Comm\sep\ppl{2} \qquad \mbox{ where }
    \mathsf{Id}\ni \code{x}  \mbox{ (Identifiers)},  n\in\integer, \sigma\in\Sigma^*
   \end{align*}}
   \caption{Syntax of $\CommS$}\label{synta}
\end{figure}

\noindent
In order to analyze a program $\prog\in\CommS$, we need to model it by building a corresponding control flow graph~\cite{compiler-design} ($\CFG$ for short), which embeds the control structure in the graph structure and leaves in the edges (or equivalently on the nodes) only the access to states, i.e., manipulation of the states (assignments) and guards. The approach we use is quite standard, and we follow \cite{compiler-design} for the construction of the control flow graph. For technical details see \cite{tops20}, here we show the construction on the example in Fig.~\ref{Figb:ifwhile}, where $i$ denotes the node corresponding to the program point $\ell_i$.
\begin{figure}[t]
\begin{subfigure}[b]{0.5\textwidth}
\begin{CenteredBox}	
\begin{lstlisting}[basicstyle=\fontsize{10}{10}\selectfont\ttfamily,escapeinside={(*}{*)}]
	(*$\pp{1}$*)x := 0;
	(*$\pp{2}$*)while (x<5)
		{(*$\pp{3}$*)x := x + 1(*$\pp{4}$*)};
	(*$\pp{5}$*)x:=7(*$\pp{6}$*)
	\end{lstlisting}
	\end{CenteredBox}
	\caption{}
	\end{subfigure}~
\begin{subfigure}[b]{0.5\textwidth}
\hspace{1cm}\includegraphics[scale=.11]{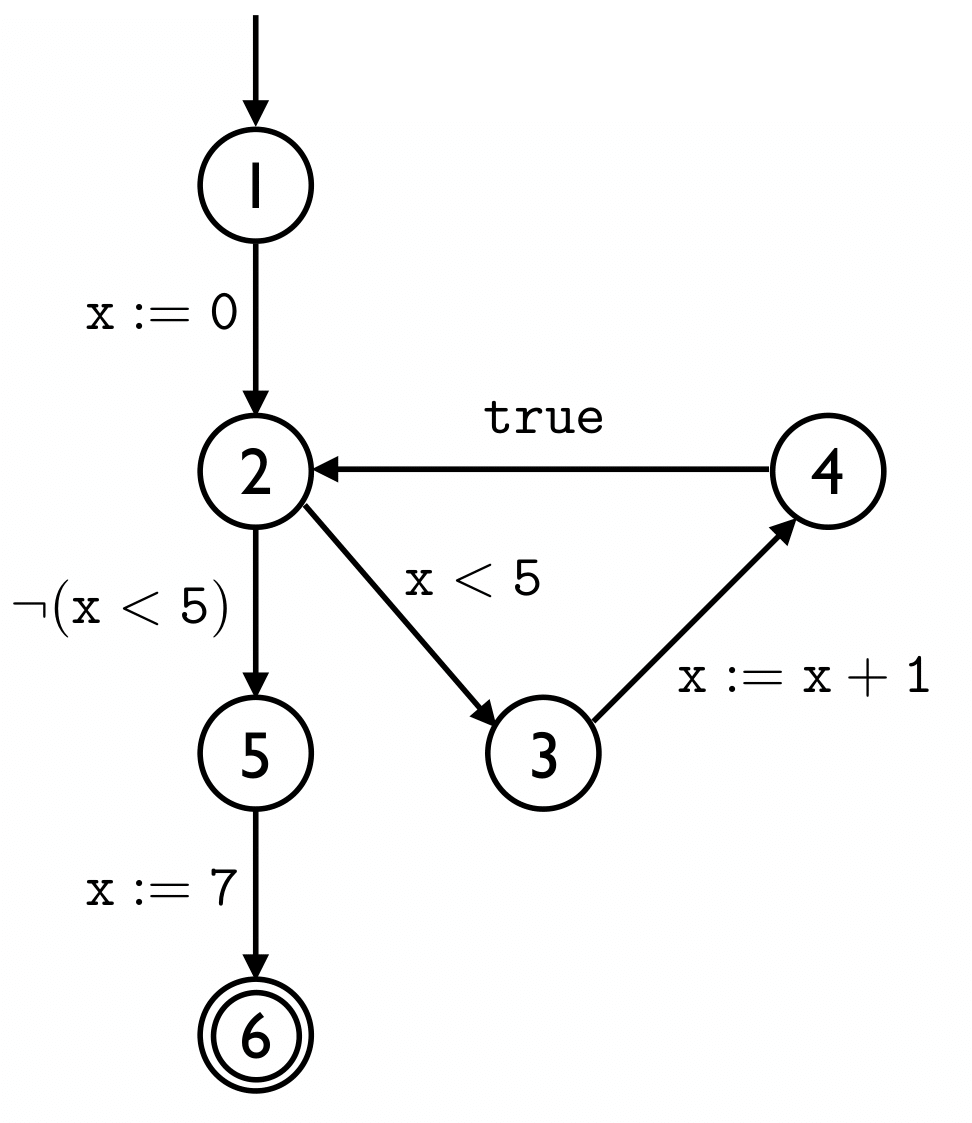}
\caption{}
\end{subfigure}
\caption{Example of $\CFG$: (a) Fragment of code and (b) corresponding $\CFG$.}\label{Figb:ifwhile}
\end{figure}
Note that, by construction \cite{tops20},  the language of the $\CFG$ edge labels is an intermediate language slightly different from the $\CommS$ grammar. Edge labels correspond to a primitive statement (i.e., an assignment or \code{eval}) or a boolean guard, namely they form the language $\CommG$ generated by the grammar $\tl ::= \; \code{x:=e}\mid {\tt b} \mid \reflect(\Sexp)$.
\paragraph*{Concrete Semantics.}
The concrete semantics of our language $\imp$ is intuitive and it is fully reported in~\cite{arceri2018}. Since
our aim is to analyze $\CommS$ programs by analyzing their $\CFG$s, we focus here only on the interpretation of $\CFG$'s labels~\cite{compiler-design}. In particular, we have to specify the semantics associated with each possible edge of the $\CFG$. In other words, we have to formalize how each statement transforms a current state, which is represented as a store, namely as an association between identifiers and values. It is well known that static program analysis works by computing (abstract) collecting semantics, namely, for each program point $\ell$ and for each variable $\code{x}$, it computes the set of values that the variable $\code{x}$ can have in any computation at the program point $\ell$. 
Hence, we define (collecting) memories $\mem$, associating with each variable a {\em set} of values. The basic values of $\CommS$ are integers, booleans and strings, hence
we define the set of memories as $\ok{\Mem\defi\Var \rightarrow (\wp(\mathbb{Z}) \cup \Bool \cup \wp(\Sigma^*))}$, ranged over the meta-variable $\mem$, where $\Bool = \wp(\{\false,\true\})$. Let us denote by $\cval$ this domain of collections of values $\wp(\mathbb{Z}) \cup \Bool \cup \wp(\Sigma^*)$.
The update of memory $\mem$ for a variable $\code{x}$ with set of values $v$ is denoted $\mem[\code{x}/v]$. The partial order $\sqsubseteq$ between memories is defined as $\mem_1 \sqsubseteq \mem_2 \Leftrightarrow \forall \code{x}\in\mathsf{Id} .\: \mem_1(\code{x}) \subseteq \mem_2(\code{x})$.
Finally, lub and glb of memories are computed point-wise, i.e., $\mem_1\sqcup \mem_2\defi\lambda \code{x} .\:\mem_1(\code{x})\cup\mem_2(\code{x})$ and $\mem_1\sqcap\mem_2\defi\lambda \code{x} .\:\mem_1(\code{x})\cap\mem_2(\code{x})$.\\ 
The collecting (input/output) semantics of statements $\Comm\in\CommG$ is defined as the function $\grasse{\Comm}: \Mem\rarr{} \Mem$. We denote by $\grasseb{\cdot}$ the collecting semantics of expressions, defined as additive lift\footnote{Let $f:S\ra S$ be a generic function, by {\em additive lift} we mean its extension to sets of elements, i.e., $\forall X\subseteq S$ we define $f(X)\defi\sset{f(x)}{x\in S}$. If $f:S\ra \wp(S)$, then its lift to sets of memories is $f(X)\defi\bigcup\sset{f(x)}{x\in S}$} to sets of memories of the standard expression semantics. We abuse notation by denoting as $\grasse{\cdot}$ also its additive lift to sets of statements.
\[
\begin{array}{rcl}
\grasse{\code{x:=e}}\mem&=&\mem[\code{x}/\grasseb{\Exp}\mem] \qquad 
\grasse{\Bexp}\mem=\mem\sqcap\bigsqcup\sset{\mem}{\grasseb{\Bexp}\mem=\true}\\
\grasse{\code{eval}(\Sexp)}\mem&=&\grasse{\grasseb{\Sexp}\mem\Cap\CommS}\mem
\end{array}
\]
where $\Cap$ is the intersection in the set of $\CommS$ programs.
By computing the traces of application of this transfer function, starting from any possible input memory, we precisely compute the maximal trace semantics \cite{mine2013}.

\paragraph*{Static Analysis on $\CFG$: Semantic Abstraction.}
It is well known that when we perform static analysis on a $\CFG$, we interpret, on the corresponding abstract domain, all the edges, and more specifically all the labels (in $\CommG$) \cite{compiler-design}. This is also a quite standard approach, but we recall it here for fixing the notation used.
%
We suppose to abstract values on the coalesced sum \cite{arceri2018} of the $\Sign$ abstract domain for integers, of the concrete domain for booleans and of the (deterministic) finite state automata abstract domain for strings \cite{arceri2018}\footnote{A string static analyzer using finite state automata abstract domain has been developed and it is available in~\cite{arceri2018}.}.
Let us consider an abstraction $\rho\in\uco(\cval)$\footnote{For the sake of simplicity here we abuse notation by considering a unique $\rho$ which is indeed the coalesced sum of three abstractions, one for integers, one for booleans and one for strings.} of the values manipulated by our language, we denote by $\rMem{\rho}:\Var\rarr{}\rho(\cval)$ the set of (collecting) memories, where sets of values are abstracted by $\rho$, ranged over $\rmem{\rho}$. 
In the following, we abuse notation by applying $\rho$ to memories in $\Mem$, simply by defining $\rho(\mem)\in\rMem{\rho}$ as $\rho(\mem):\code{x}\in\Var\mapsto \rho(\mem(\code{x}))$\footnote{For the sake of simplicity of presentation and implementation, we have considered here non-relational abstractions of data, anyway we believe that it is possible to easy extend our work to relational abstractions.}. In this way, we can see abstract memories as sets of concrete memories, and therefore as particular collecting memories, i.e., $\rMem{\rho}\subseteq\Mem$.
Finally, we can define the abstract edge effect $\grasse{\cdot}^\rho$ \cite{compiler-design} telling us how to abstractly interpret each edge of the $\CFG$: 
\[
\begin{array}{rcl}
\grasse{\code{x:=e}}^\rho\rmem{\rho}&=&\rmem{\rho}[\code{x}/\rho(\grasseb{\exp}^{\rho}\rmem{\rho})] \qquad\qquad
\grasse{\Bexp}^\rho\rmem{\rho}=\rmem{\rho}\sqcap\rho(\sqcup\sset{\mem}{\true\in\grasseb{\Bexp}^{\rho}\rmem{\rho}})\\
\grasse{\code{eval}(\Sexp)}^\rho\rmem{\rho}&=&\grasse{\grasseb{\Sexp}^\rho\rmem{\rho}\Cap\CommS}^\rho\rmem{\rho}
\end{array}
\]
where $\grasseb{\cdot}^\rho\defi\rho\comp\grasseb{\cdot}\comp\rho$.
The semantics of a path in the $\CFG$ is the composition of the interpretation of each edge, and the interpretation of an edge is the interpretation, given above, of its label \cite{compiler-design}.

This is clearly, what happens when the $\CFG$ is not abstracted, namely when the edge labels are single statements.
Finally, since we deal with potential abstract $\CFG$, we have to say how we execute them, potentially on an abstract semantics. The idea is simple, since we move from executing single statements to executing sets of statements, we simply take as execution of the abstract $\CFG$ the additive lift of the single statements executions. Since the semantics is always additive\footnote{A function is said to be {\em additive} if it commutes with least upper bound.}, in order to guarantee that everything works, also the semantic abstraction $\rho$ must be additive. Hence, in the following of the paper we always require $\rho$ to be additive.

\section{Semantic-driven Code Abstraction}\label{sec:asfa}\label{sect:cfg-as-sfa}
In this section, we study how we can model a syntactic abstraction of the $\CFG$ and which is its {\em relation} with the semantic abstraction, i.e., the code analysis.  

\paragraph*{Modeling Code Abstraction.}
%
Following the standard approach for abstracting objects, we should abstract each $\CFG$ in a set of $\CFG$s sharing an invariant property, i.e., an equivalence class of $\CFG$s. 
In particular, since we aim at abstracting code ($\CFG$) without changing the analysis performed on the code, we choose to abstract $\CFG$ by abstracting edge labels, and by leaving unchanged the control structure of the $\CFG$.  In other words, an abstract $\CFG$, denoted $\aCFG$, is a pair $\tuple{\nodes,\aedges}$, where we leave the nodes unchanged, while the edge labels are abstracted to sets of labels. Formally, $\aedges\subseteq\nodes\times\wp(\Psi)\times\nodes$, where $\Psi$ is the $\CFG$ label language.

Given $\eta\in\uco(\wp(\Psi))$, $\aetG\defi\tuple{\nodes(\tG),\edges^\eta(\tG)}$ is the $\aCFG$ built from a $\CFG$ $\tG$ in terms of $\eta$, where $\edges^\eta(\tG)\subseteq\nodes(\tG)\times\eta(\wp(\Psi))\times\nodes(\tG)$.

\begin{figure}
\centering
	\includegraphics*[scale=.14]{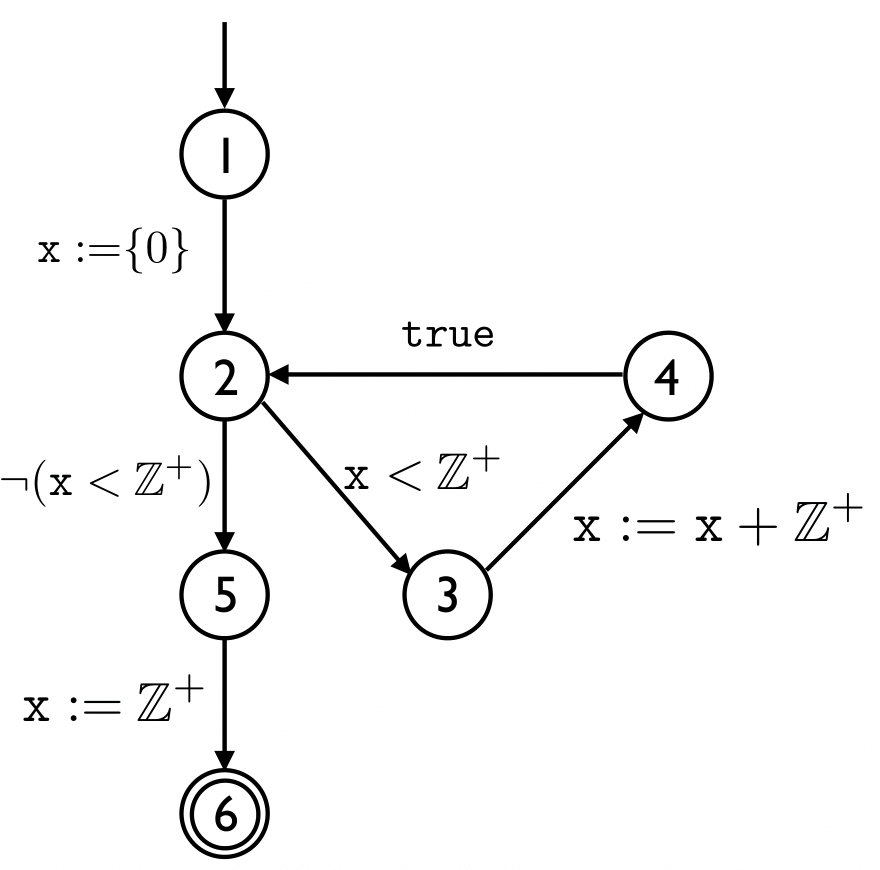}
\caption{$\CFG$ abstracted by signs.}	\label{fig:asfa}
\end{figure}
\noindent
As an example, consider the $\CFG$ in Fig.~\ref{Figb:ifwhile}, in Fig.~\ref{fig:asfa} we have the $\aCFG$ where numerical expressions are abstracted by $\sset{\code{m}}{m\in\Sign(n)}$\footnote{We use $n$ to denote the semantic value corresponding to the syntactic symbol $\code{n}$.} (where $\Sign$ is the well-known sign abstraction $\Sign\in uco(\wp(\Int))$ such as $\Sign(\wp(\Int))=\{\top,\cint^+,\cint^-,\{0\},\varnothing\}$). For instance, $\code{x:=x+1}$ is abstracted in $\code{x:=x+}\cint^+$ where $\code{x+}\cint^+\defi\sset{\code{x+n}}{n\in\cint^+}$, being $\Sign(1)=\cint^+$.
%
%

%
\paragraph*{Abstracting Code vs Abstracting Semantics.}
As previously noted, we aim at characterizing code abstractions, for dynamically generated code, for which the given analysis works precisely. 
Formally, let us consider the following equation:
\vspace{-.2cm}
\begin{equation}\label{eq:comFW}
\forall\rmem{\rho}\in\rMem{\rho}\subseteq\Mem.\:\forall\varphi\in\Psi.\:\interp{\eta(\varphi)}\rmem{\rho}=\interp{\eta(\varphi)}^\rho\rmem{\rho}
\vspace{-.2cm}
\end{equation}
If this equality does not hold it means that the abstract semantic interpretation $\interp{\cdot}^\rho$ merges predicates distinguished by $\eta$. Namely, when the program is observed by means of its (abstract) semantics the actual abstraction of predicates is not precisely $\eta$, but it is $\eta$ affected in some way by $\interp{\cdot}^\rho$. By changing the point of view, we have that, in this case, the analysis cannot precisely interpret the abstract code, since $\eta$ abstracts the code by distinguishing information that $\rho$ cannot distinguish.\\
As an example, consider the sign domain above, when $\eta(\code{x:=5})=\sset{\code{x:=n}}{1\leq n\leq 5}$ the equation does not hold since the concrete semantics of this set does not take {\em any} positive value for $\code{x}$. While, if $\eta(\code{x:=5})=\sset{\code{x:=n}}{n\in\cint^+\cup\{0\}}$, then Eq.~\ref{eq:comFW} holds since its concrete semantics is precisely the set of  non-negative values.
It is worth noting that Eq.~\ref{eq:comFW} is a forward completeness \cite{GQ01} of the code abstraction w.r.t.\ the semantic interpretation, meaning that the semantic abstraction does not add imprecision to the code one. \\
%
In order to investigate the relation existing between the code abstraction $\eta$ and the semantic abstraction $\rho$, we observe that, whenever we have a semantic abstraction $\rho$, we have a natural code abstraction induced by $\rho$. Namely, by only observing (abstract) information about the computation, we cannot distinguish statements with the same (abstract) semantics, independently from what any possible code abstraction does. For instance, if we analyze parity of program variables, we are unable to distinguish $\code{x:=2}$ from $\code{x:=4}$, independently from how a potential code abstraction $\eta$ is defined on $\code{x:=2}$.  The first step consists in defining a code abstraction for expressions in terms of semantic one. Consider $\rho\in\uco(\cval)$, we define $\heta(\exp)$ inductively on the expressions structure
\[
\begin{array}{ll}
\heta(\Aexp)&:
\left \{
\begin{array}{ll}
\heta(\Aexp_1 \mbox{\tt op}\:\Aexp_2)\defi \sset{\Aexp'\mbox{\tt op}\:\Aexp''}{\Aexp'\in\heta(\Aexp_1),\Aexp''\in\heta(\Aexp_2)}\defi\heta(\Aexp_1)\mbox{\tt op}\:\heta(\Aexp_2)\\
\heta(\code{x})\defi \code{x},\qquad\heta(\code{n})\defi \sset{\code{m}}{m\in\rho(\{n\})}\\
\end{array}
\right.\\
\heta(\Bexp)&:
\left \{
\begin{array}{ll}
\heta(\Bexp_1 \mbox{\tt bop}\:\Bexp_2)\defi  \heta(\Bexp_1)\mbox{\tt bop}\:\heta(\Bexp_2),\qquad
\heta(\neg\Bexp)\defi\neg\heta(\Bexp)\\
\heta(\code{x})\defi \code{x},\qquad \heta(\true)\defi\sset{\code{t}}{t\in\rho(\true)},\qquad \heta(\false)\defi\sset{\code{t}}{t\in\rho(\false)}
\end{array}
\right. \\
\heta(\Sexp)&:
\left \{
\begin{array}{ll}
\heta(\concat{\Sexp_1}{\Sexp_2})\defi  \concat{\heta(\Sexp_1)}{\heta(\Sexp_2)},\\
\heta(\subst{\Sexp}{\Aexp_1}{\Aexp_2})\defi\subst{\heta(\Sexp)}{\heta(\Aexp_1)}{\heta(\Aexp_2)}\\
\heta(\code{x})\defi \code{x},\qquad\heta(\mstr{\sigma})\defi\sset{\mstr{\delta}}{\delta\in\rho(\sigma)}
\end{array}
\right. 
\end{array}
\]
At this point, we can characterize the $\CFG$ labels abstraction $\ov{\CAbs}[\rho]:\wp(\Psi)\lra\wp(\Psi)$, as the additive lift of the function
%
%
%
\[
\begin{array}{rl}
\ov{\CAbs}[\rho](\code{x:=}\exp)&\defi x:=\heta(\Exp)\defi\sset{x:=\Exp'}{\Exp'\in\heta(\Exp)}\\ 
\ov{\CAbs}[\rho](\bexp)&\defi \heta(\bexp)\qquad \qquad
\ov{\CAbs}[\rho](\code{eval}(\Sexp))\defi \code{eval}(\heta(\Sexp))
\end{array}
\]
where $\code{eval}(\heta(\Sexp))$ is treated as the implicit representation of all the statements that it can execute, namely it represents the (potentially infinite) set $\sset{\tc}{\grasse{\tc}\mem\sqsubseteq\grasse{\grasseb{\Sexp}^\rho\Cap\CommS}^\rho\mem}$.

\noindent
The following result is immediate by construction.
\begin{proposition}
Given $\rho\in\uco(\cval)$, then $\ov{\CAbs}[\rho]\in\uco(\wp(\Psi))$ and it is additive.
\end{proposition}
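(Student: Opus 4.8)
The plan is to check the three defining axioms of an upper closure operator --- monotonicity, extensivity and idempotency --- together with additivity, exploiting that $\ov{\CAbs}[\rho]$ is \emph{by definition} the additive lift to $\wp(\Psi)$ of its action on single labels, so that $\ov{\CAbs}[\rho](X)=\bigcup_{\varphi\in X}\ov{\CAbs}[\rho](\varphi)$ for every $X\subseteq\Psi$. This lets me push every global property down to the point level, and from there down to the operator $\heta$ on expressions, whose behaviour is in turn governed entirely by the uco axioms of $\rho$.

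Two of the four properties are free. Additivity is immediate from the lift: $\ov{\CAbs}[\rho](\bigcup_i X_i)=\bigcup_{\varphi\in\bigcup_i X_i}\ov{\CAbs}[\rho](\varphi)=\bigcup_i\bigcup_{\varphi\in X_i}\ov{\CAbs}[\rho](\varphi)=\bigcup_i\ov{\CAbs}[\rho](X_i)$; and monotonicity is the same identity read on a larger index set (an additive map is monotone). Thus the whole argument reduces to proving that the point map is extensive and a closure, and for this I would isolate a single structural lemma on $\heta$, proved by induction on the structure of $\Exp$: \textbf{(i)}~$\exp\in\heta(\exp)$, and \textbf{(ii)}~$\exp'\in\heta(\exp)$ implies $\heta(\exp')\subseteq\heta(\exp)$.

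The base cases of this lemma are exactly where $\rho$ enters, and they use precisely its three axioms. For (i), since $\heta(\code{n})=\sset{\code{m}}{m\in\rho(\{n\})}$, extensivity of $\rho$ gives $n\in\rho(\{n\})$, hence $\code{n}\in\heta(\code{n})$; the clauses for $\true$, $\false$ and string constants are identical, and $\heta(\code{x})=\code{x}$ is trivial. For (ii), if $\code{m}\in\heta(\code{n})$, i.e.\ $m\in\rho(\{n\})$, then $\{m\}\subseteq\rho(\{n\})$, so by monotonicity and idempotency of $\rho$ we get $\rho(\{m\})\subseteq\rho(\rho(\{n\}))=\rho(\{n\})$, that is $\heta(\code{m})\subseteq\heta(\code{n})$. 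The inductive cases for the arithmetic $\mathtt{op}$, the boolean $\mathtt{bop}$ and $\neg$, and for $\concat{\cdot}{\cdot}$ and $\subst{\cdot}{\cdot}{\cdot}$, all follow by unfolding the compositional definition of $\heta$ and applying the induction hypotheses componentwise. From (i), extensivity of $\ov{\CAbs}[\rho]$ follows label by label ($\varphi\in\ov{\CAbs}[\rho](\varphi)$ in the assignment and guard clauses) and then lifts to $X\subseteq\ov{\CAbs}[\rho](X)$. For idempotency, extensivity and monotonicity already give one inclusion, so it suffices to show $\ov{\CAbs}[\rho](\psi)\subseteq\ov{\CAbs}[\rho](\varphi)$ for every $\psi\in\ov{\CAbs}[\rho](\varphi)$; in the assignment clause this is exactly $\heta(\exp')\subseteq\heta(\exp)$ for $\exp'\in\heta(\exp)$, i.e.\ (ii), and similarly for guards.

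The one step I expect to be genuinely delicate is the $\reflect(\Sexp)$ clause, because there $\ov{\CAbs}[\rho]$ is not defined syntactically but as the implicit set $\sset{\tc}{\grasse{\tc}\mem\sqsubseteq\grasse{\grasseb{\Sexp}^\rho\Cap\CommS}^\rho\mem}$ of statements the \code{eval} may execute. Here the clean structural induction is unavailable, and both extensivity and idempotency have to be argued semantically: I would show that replacing $\Sexp$ by its abstraction does not shrink this set (extensivity) and that re-abstracting cannot enlarge it (idempotency), in each case reducing to extensivity and idempotency of $\rho$ acting inside $\grasseb{\cdot}^\rho=\rho\circ\grasseb{\cdot}\circ\rho$, together with monotonicity of the semantics. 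Everything else is the routine componentwise unfolding that makes the statement ``immediate by construction''.
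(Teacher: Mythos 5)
Your argument is correct and is exactly the expansion of what the paper dismisses as ``immediate by construction'' (the paper supplies no proof at all): additivity and monotonicity come for free from the additive lift, and extensivity and idempotency reduce to your structural lemma (i)--(ii) on $\heta$, whose base cases consume precisely the extensivity, monotonicity and idempotency of $\rho$. Your caution about the $\reflect(\Sexp)$ clause is also well placed --- the paper's ``implicit representation'' of $\code{eval}(\heta(\Sexp))$ as a memory-dependent set of commands is the one part of the definition not stated rigorously enough for the closure axioms to be checked purely syntactically, and reducing that case to the uco axioms of $\rho$ acting inside $\grasseb{\cdot}^\rho$, as you propose, is the right reading.
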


Finally, in order to show that this code abstraction can be used to force satisfiability of  Eq.~\ref{eq:comFW}, we have first to characterize the meaning of interpreting an edge label abstracted by $\ov{\CAbs}[\rho]$:
\[
\begin{array}{rcl}
\grasse{\code{x:=}\heta(\exp)}\mem&=&\bigsqcup\sset{\grasse{\code{x:=e}'}\mem}{\exp'\in\heta(\exp)} \qquad\qquad
\grasse{\heta(\Bexp)}\mem=\bigsqcup\sset{\grasse{\bexp'}\mem}{\bexp'\in\heta(\bexp)}\\
\grasse{\code{eval}(\heta(\Sexp))}\mem&=&\bigsqcup\sset{\grasse{\tc}\mem}{\grasse{\tc}\mem\sqsubseteq\grasse{\grasseb{\Sexp}^\rho\Cap\CommS}^\rho\mem}
\end{array}
\]

\noindent
Then we have the following results

\begin{lemma}\label{lemmaImp}
Given $\rho\in\uco(\cval)$ additive, then $\forall\exp.\:\forall\mem\in\rMem{\rho}.\:\grasseb{\heta(\exp)}\mem=\grasseb{\exp}^\rho\mem$ (trivially implying $\exp'\in\heta(\exp)\ \Lra\ \forall\mem\in\rMem{\rho}.\:\grasseb{\exp'}\mem\subseteq\grasseb{\exp}^\rho\mem$) and $\forall\Phi\in\wp(\Psi).\:\forall\mem\in\rMem{\rho}.\:\interp{\ov{\CAbs}[\rho](\Phi)}\mem=\interp{\Phi}^\rho\mem$.
\end{lemma}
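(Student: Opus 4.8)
The plan is to prove the two assertions in sequence: first the expression-level identity $\grasseb{\heta(\exp)}\mem=\grasseb{\exp}^\rho\mem$ for every $\mem\in\rMem{\rho}$ by structural induction on $\exp$, and then to lift it to the label-level identity $\interp{\ov{\CAbs}[\rho](\Phi)}\mem=\interp{\Phi}^\rho\mem$ by a case analysis on the three shapes of a label in $\Psi$. Two observations drive every case and I would record them up front: since $\mem\in\rMem{\rho}$ each $\mem(\code{x})$ is already a fixpoint of $\rho$, so that $\rho(\mem)=\mem$ and the definition $\grasseb{\exp}^\rho=\rho\circ\grasseb{\exp}\circ\rho$ simplifies; and since $\grasseb{\cdot}$ is the additive lift both to sets of memories and to sets of expressions, $\grasseb{\heta(\exp)}\mem=\bigsqcup\sset{\grasseb{\exp'}\mem}{\exp'\in\heta(\exp)}$.

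For the expression part, the base cases are immediate. For $\heta(\code{x})=\code{x}$ both sides equal $\mem(\code{x})$, using the $\rho$-closedness just noted; for $\heta(\code{n})=\sset{\code{m}}{m\in\rho(\{n\})}$ the left-hand union of singletons collects exactly $\rho(\{n\})$, which is the abstract value of $\code{n}$; the atoms $\true$, $\false$ and $\mstr{\sigma}$ are handled identically. The inductive step treats an operator node $\Aexp_1\,\mathtt{op}\,\Aexp_2$ (and symmetrically $\Bexp_1\,\mathtt{bop}\,\Bexp_2$, $\neg\Bexp$, $\concat{\cdot}{\cdot}$ and $\subst{\cdot}{\cdot}{\cdot}$): because $\heta$ distributes over the operator and the concrete collecting semantics evaluates $\mathtt{op}$ pointwise and additively in each argument, $\grasseb{\heta(\Aexp_1\,\mathtt{op}\,\Aexp_2)}\mem$ unfolds into $\grasseb{\heta(\Aexp_1)}\mem\,\mathtt{op}\,\grasseb{\heta(\Aexp_2)}\mem$, at which point the inductive hypothesis replaces each factor by $\grasseb{\Aexp_i}^\rho\mem$. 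The parenthetical membership statement then follows, its forward direction because each $\exp'\in\heta(\exp)$ contributes a single term to the lub $\grasseb{\heta(\exp)}\mem=\grasseb{\exp}^\rho\mem$, so that $\grasseb{\exp'}\mem\subseteq\grasseb{\exp}^\rho\mem$.

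For the label part I would use the interpretations of $\ov{\CAbs}[\rho]$-abstracted labels displayed just before the statement. Since $\ov{\CAbs}[\rho]$ is the additive lift of its action on single labels and both $\interp{\cdot}$ and $\interp{\cdot}^\rho$ are additive, it suffices to verify the identity on a single label and then take the lub over $\Phi$. For an assignment, $\interp{\ov{\CAbs}[\rho](\code{x:=e})}\mem=\mem[\code{x}/\grasseb{\heta(\exp)}\mem]$, which by the expression part equals $\mem[\code{x}/\grasseb{\exp}^\rho\mem]=\interp{\code{x:=e}}^\rho\mem$; the boolean-guard case reduces likewise to the expression identity applied to $\Bexp$. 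The $\reflect$ case is the delicate one: here $\ov{\CAbs}[\rho](\reflect(\Sexp))=\reflect(\heta(\Sexp))$ denotes the set $\sset{\tc}{\grasse{\tc}\mem\sqsubseteq\grasse{\grasseb{\Sexp}^\rho\Cap\CommS}^\rho\mem}$, and I would discharge it by unfolding the given meaning of $\interp{\reflect(\heta(\Sexp))}\mem$ and matching its bound, via the string case $\grasseb{\heta(\Sexp)}\mem=\grasseb{\Sexp}^\rho\mem$ of the first part, to $\interp{\reflect(\Sexp)}^\rho\mem$.

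The main obstacle I anticipate is the operator step of the first part: establishing that applying the concrete operation to the $\rho$-fattened operands reproduces exactly the abstract edge effect, i.e.\ that pushing the union over $\heta(\Aexp_1)\times\heta(\Aexp_2)$ through $\mathtt{op}$ neither adds nor loses any element relative to $\grasseb{\Aexp_1\,\mathtt{op}\,\Aexp_2}^\rho$. This is precisely where additivity of $\rho$ is indispensable, since it is what lets $\rho$ commute with the unions produced both by the collecting semantics and by $\heta$'s fattening of the atoms, and it is the reason the hypothesis is stated for additive $\rho$ only; the $\reflect$ case inherits the same difficulty through its dependence on $\grasseb{\Sexp}^\rho$.
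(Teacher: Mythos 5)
Your proposal is correct and follows essentially the same route as the paper's proof: structural induction on expressions for the first identity (with the same base cases and the same use of $\rho$-closedness of memories in $\rMem{\rho}$ and additivity of $\rho$ at the operator step), then a case analysis on the three label shapes using the displayed interpretations of $\ov{\CAbs}[\rho]$-abstracted labels, and finally additivity to pass from single labels to arbitrary $\Phi\in\wp(\Psi)$. The paper handles the $\reflect$ case by invoking additivity of $\grasse{\cdot}^\rho$ to collapse the lub of the commands bounded by $\grasse{\grasseb{\Sexp}^\rho\Cap\CommS}^\rho\mem$, which is the same matching step you describe.
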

\begin{proof}
Let us prove first the property for expressions by induction on the syntactic structure of $\exp$.
\begin{itemize}
\item[ ] $\exp=\code{n}$: $\grasseb{\heta(\exp)}\mem=\grasseb{\heta(\code{n})}\mem\defi\rho(n)$, while $\grasseb{\exp}^\rho\mem=\grasseb{\code{n}}^\rho\mem=\rho(n)$ (where $\grasseb{\code{n}}\mem=n$);
\item[ ] $\exp=\code{x}$: $\grasseb{\heta(\exp)}\mem=\grasseb{\heta(\code{x})}\mem\defi\grasseb{\code{x}}\mem=\mem(\code{x})$, while $\grasseb{\exp}^\rho\mem=\grasseb{\code{x}}^\rho\mem=\rho(\mem(\code{x}))=\mem(\code{x})$ (since $\mem\in\rMem{\rho}$);
\item[ ] $\exp=\exp_1\:\code{op}\:\exp_2$: Suppose $\code{op}$ any arithmetic or boolean operator.\\ $\grasseb{\heta(\exp)}\mem=\grasseb{\heta(\exp_1\:\code{op}\:\exp_2)}\mem\defi\grasseb{\heta(\exp_1)\:\code{op}\:\heta(\exp_2)}\mem=\grasseb{\heta(\exp_1)}\mem\:\code{op}\:\grasseb{\heta(\exp_2)}\mem=\grasseb{\exp_1}^\rho\mem\:\code{op}\:\grasseb{\exp_2}^\rho\mem$ by inductive hypothesis. But this is precisely $\grasseb{\exp_1\:\code{op}\:\exp_2}^\rho\mem$ since $\code{op}$ is computed on the semantics as additive lift to sets.
\item[ ] Analogously, we can prove all the other cases.
\end{itemize}
Now, let us prove the fact for $\CFG$ single edge labels, again by induction on the syntactic structure. Note that, being $\rho$ additive then also $\grasse{\cdot}^\rho$ is additive, being also the concrete semantics additive on sets of statements.
\begin{itemize}
\item[ ]
\[
\begin{array}{lll}
\grasse{\ov{\CAbs}[\rho](\code{x:=e})}\mem&=&\grasse{\code{x:=}\heta(\exp)}\mem\\
&=&\bigsqcup\sset{\grasse{\code{x:=e}'}\mem}{\exp'\in\heta(\exp)}\\
&=&\bigsqcup\sset{\mem[\code{x}/\grasseb{\exp'}\mem]}{\exp'\in\heta(\exp)}\\
&=&\mem[\code{x}/\bigcup\sset{\grasseb{\exp'}\mem}{\exp'\in\heta(\exp)}]\\
&=&\mem[\code{x}/\bigcup\sset{\grasseb{\exp'}\mem}{\grasseb{\exp'}\mem\subseteq\grasseb{\exp}^\rho\mem}]\\
&=&\mem[\code{x}/\grasseb{\exp'}^\rho\mem]=\grasse{\code{x:=}\exp}^\rho\mem\\
\end{array}
\]
\item[ ] 
\[
\begin{array}{lll}
\grasse{\ov{\CAbs}[\rho](\bexp)}\mem&=&\grasse{\heta(\bexp)}\mem\\
&=&\bigsqcup\sset{\grasse{\bexp'}\mem}{\bexp'\in\heta(\bexp)}\\
&=&\bigsqcup\sset{\mem\sqcap\bigsqcup\sset{\mem}{\grasseb{\Bexp'}\mem=\true}]}{\bexp'\in\heta(\bexp)}\\
&=&\mem\sqcap\bigsqcup\sset{\mem}{\grasseb{\Bexp'}\mem=\true,\ \bexp'\in\heta(\bexp)}\\
&=&\mem\sqcap\bigsqcup\sset{\mem}{\grasseb{\Bexp'}\mem=\true,\ \grasseb{\bexp'}\mem\subseteq\grasseb{\bexp}^\rho\mem}\\
&=&\mem\sqcap\bigsqcup\sset{\mem}{\true\in\grasseb{\bexp}^\rho\mem}=\grasse{\bexp}^\rho\mem\\
\end{array}
\]
\item[ ] 
\[
\begin{array}{lll}
\grasse{\ov{\CAbs}[\rho](\code{eval}(\Sexp))}\mem&=&\grasse{\code{eval}(\heta(\Sexp))}\mem\\
&=&\bigsqcup\sset{\grasse{\tc}\mem}{\grasse{\tc}\mem\sqsubseteq\grasse{\grasseb{\Sexp}^\rho\Cap\CommS}^\rho\mem}\\
\mbox{By additivity of $\grasse{\cdot}^\rho$}&=&\grasse{\grasseb{\Sexp}^\rho\Cap\CommS}^\rho\mem=\grasse{\code{eval}(\Sexp)}^\rho\mem\\
\end{array}
\]
\end{itemize}
Finally, for each set of labels $\Phi$, we have that $\grasse{\ov{\CAbs}[\rho](\Phi)}\mem=\bigsqcup_{\varphi\in\Phi}\grasse{\ov{\CAbs}[\rho](\varphi)}\mem=\bigsqcup_{\varphi\in\Phi}\grasse{\varphi}^\rho\mem=\grasse{\Phi}^\rho\mem$, since all the involved functions are additive by definition or by construction.
\end{proof}

\noindent
Then we have that:
\begin{theorem}\label{th:omega}
Let $\rho\in\uco(\cval)$ additive, and $\eta\in\uco(\wp(\Psi))$. Then $\ov{\eta}_\uparrow\defi\ov{\CAbs}[\rho]\circ\eta$ satisfies Eq.~\ref{eq:comFW}.
\end{theorem}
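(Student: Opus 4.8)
The plan is to recognize that Equation~\ref{eq:comFW}, read for the candidate abstraction $\ov{\eta}_\uparrow$, is exactly the forward completeness already proved in Lemma~\ref{lemmaImp}, once we observe that $\ov{\eta}_\uparrow(\varphi)$ is a fixpoint of the closure $\ov{\CAbs}[\rho]$. First I unfold the goal: for every $\mem\in\rMem{\rho}$ and every $\varphi\in\Psi$ I must establish
\[
\interp{\ov{\eta}_\uparrow(\varphi)}\mem=\interp{\ov{\eta}_\uparrow(\varphi)}^\rho\mem,
\]
where by definition $\ov{\eta}_\uparrow(\varphi)=\ov{\CAbs}[\rho](\eta(\varphi))$. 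Note that I do not need $\ov{\eta}_\uparrow$ itself to be a closure operator, since Equation~\ref{eq:comFW} is a pointwise identity that makes sense for any label map.

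The key step is an idempotence observation. The Proposition establishing $\ov{\CAbs}[\rho]\in\uco(\wp(\Psi))$ tells us that $\ov{\CAbs}[\rho]$ is idempotent, so every element of its image is a fixpoint. In particular, writing $\Phi\defi\eta(\varphi)$,
\[
\ov{\CAbs}[\rho]\bigl(\ov{\eta}_\uparrow(\varphi)\bigr)=\ov{\CAbs}[\rho]\bigl(\ov{\CAbs}[\rho](\eta(\varphi))\bigr)=\ov{\CAbs}[\rho](\eta(\varphi))=\ov{\eta}_\uparrow(\varphi),
\]
so the set of labels $\ov{\eta}_\uparrow(\varphi)$ is already stable under the $\rho$-induced code abstraction.

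Now I instantiate the second statement of Lemma~\ref{lemmaImp}, namely $\interp{\ov{\CAbs}[\rho](\Phi)}\mem=\interp{\Phi}^\rho\mem$ for all $\Phi\in\wp(\Psi)$ and $\mem\in\rMem{\rho}$, at the legitimate choice $\Phi\defi\ov{\eta}_\uparrow(\varphi)$. This yields
\[
\interp{\ov{\CAbs}[\rho](\ov{\eta}_\uparrow(\varphi))}\mem=\interp{\ov{\eta}_\uparrow(\varphi)}^\rho\mem,
\]
and rewriting the left-hand side via the fixpoint identity above collapses it to $\interp{\ov{\eta}_\uparrow(\varphi)}\mem$, giving precisely $\interp{\ov{\eta}_\uparrow(\varphi)}\mem=\interp{\ov{\eta}_\uparrow(\varphi)}^\rho\mem$, i.e.\ Equation~\ref{eq:comFW} for $\ov{\eta}_\uparrow$.

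I expect no genuine obstacle: the argument is a single instantiation of Lemma~\ref{lemmaImp} on the $\ov{\CAbs}[\rho]$-closed label set $\ov{\eta}_\uparrow(\varphi)$, the only real ingredient being the idempotence of $\ov{\CAbs}[\rho]$ supplied by the Proposition. The conceptual content---that precomposing an arbitrary code abstraction $\eta$ with the $\rho$-induced abstraction $\ov{\CAbs}[\rho]$ forces forward completeness---is entirely carried by Lemma~\ref{lemmaImp}, which in fact shows that \emph{any} code abstraction whose output is $\ov{\CAbs}[\rho]$-closed satisfies Equation~\ref{eq:comFW}; the theorem merely packages this by composing with $\ov{\CAbs}[\rho]$ to guarantee closedness.
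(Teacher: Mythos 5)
Your proposal is correct and follows essentially the same route as the paper: both arguments reduce the claim to Lemma~\ref{lemmaImp} instantiated at $\Phi=\ov{\CAbs}[\rho](\eta(\varphi))$, using the idempotence of the closure $\ov{\CAbs}[\rho]$ to identify $\interp{\ov{\CAbs}[\rho](\ov{\eta}_\uparrow(\varphi))}$ with $\interp{\ov{\eta}_\uparrow(\varphi)}$. The paper merely writes this as a single chain of equalities rather than isolating the fixpoint observation first, so there is no substantive difference.
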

\begin{proof}
It is worth noting that, we trivially have by abstraction that $\forall\varphi\in\Psi.\:\interp{\eta_\uparrow(\varphi)}\subseteq\interp{\eta_\uparrow(\varphi)}^\rho$. Let us prove the other implication: $\forall\varphi\in\Psi$
\[
\begin{array}{llr}
\interp{\eta_\uparrow(\varphi)}&=\interp{\ov{\CAbs}[\rho]\circ\eta(\varphi)}&\\
&=\interp{\ov{\CAbs}[\rho]\circ\ov{\CAbs}[\rho]\circ\eta(\varphi)}&\qquad [\mbox{By properties of uco}]\\
&=\interp{\ov{\CAbs}[\rho]\circ\eta(\varphi)}^\rho& [\mbox{By Lemma.~\ref{lemmaImp}}]\\
&=\interp{\eta_\uparrow(\varphi)}^\rho
\end{array}
\]
\vspace{-.5cm}
\end{proof}

\noindent
This result tells us that by taking a code abstraction more abstract than (or equal to) $\ov{\CAbs}[\rho]$,  we guarantee that the abstract interpretation $\rho$ {\em can be} performed on the abstracted program (Eq.~\ref{eq:comFW}). We have so far proved that it is always possible to force  Eq.~\ref{eq:comFW}, in order to make it possible to continue the analysis (observing $\rho$) also on the abstracted code. In the following we show how this framework can be integrated with the existing analysis of dynamic code \cite{tops20} in order to improve its precision.

\section{An Improved Dynamic Code Analysis}\label{sec:appl}
In this section we show how the constructive code abstraction characterization, provided in the previous section, can be used for representing the code approximation which soundly captures the potential code executed by a string-to-code statement. As we will show, without abstracting code, we cannot capture situations where the collecting semantics on strings generates sets of statements that cannot be represented by using the concrete syntax. Nevertheless, we must also observe that the analyzer cannot change dynamically with the generated code, hence the abstraction {\em must} be driven by the semantic property analyzed. This means that, without using the proposed framework, the analysis would surely be less precise in those situations where code abstraction becomes a necessity. \\

\noindent
Let us summarize how we propose to exploit the framework:
\begin{itemize}
\item[$\bullet$] Consider a fixed semantic abstraction $\rho\in\uco(\cval)$ and a corresponding static analyzer, designed in such a way that it can interpret also code abstracted by $\ov{\CAbs}[\rho]$. 
\item[$\bullet$] Analyze the program, and when an $\code{eval}$ is met, extract the language of its argument. If the language is infinite (under specific conditions that we will discuss) build the abstract $\CFG$ approximating it and extract the corresponding code abstraction $\eta$. In general, this code abstraction $\eta$ is not more abstract than $\ov{\CAbs}[\rho]$ (the code abstraction already embedded in the static analyzer, depending only on $\rho$);
\item[$\bullet$] Build $\ov{\CAbs}[\rho]\circ\eta$ in order to make also the generated code (approximated by the generated abstract $\CFG$) analyzable by the static analysis for $\rho$.
\end{itemize}
%
\paragraph*{Analyzing Dynamic Code.}
Let $\rho$ be a static analysis performing in particular $\rhos\in uco(\wp(\cstr))$ on strings, where $\cstr = \alphabet^*$ denotes strings over a finite alphabet $\alphabet$. 
Note that, our analyzer has to work on any (abstract) $\CFG$ that can be dynamically generated, hence it has to be designed with this purpose in mind. In particular, as we will show, we will generate only abstract $\CFG$s with a code abstraction $\eta$ complete w.r.t.\ $\rho$. This means, by construction, that $\eta$ must be more abstract than $\ov{\CAbs}[\rho]$, which means that each set of elements in $\eta$ corresponds to a subset of the elements (abstract predicates) of $\ov{\CAbs}[\rho]$. Hence, in order to guarantee to interpret predicates in any $\eta$ complete, it is sufficient to design the analyzer soundly interpreting  any abstract predicate in $\ov{\CAbs}[\rho]$. For instance, $\ov{\CAbs}[\Sign]$ is the abstraction containing all the predicates, involving integers, of the form $\code{x:=S}$, $\code{x<S}$, etc, with $\code{S}\in\Sign$, e.g., an abstract predicate is $\code{x:=}\mathbb{Z}^+$, and the analyzer for $\Sign$ should be able to interpret also such abstract predicates. \\
Let \code{x} be the input string parameter of an \code{eval} statement, we denote by $\sval{\rhos}{\code{x}}$ the abstract value for $\code{x}$ computed by the analysis on $\rhos$.
For example, suppose that the collection of values for the string $\code{x}$ before the \code{eval} is $\{\code{a:=0}, \code{a:=1}\}$.
By defining $\rhos$ as the $k$-bounded string set abstract domain~\cite{amadini18}, with $k = 2$, $\sval{\rhos}{\code{x}} = \{\code{a:=0}, \code{a:=1}\}$, while by using the prefix abstract domain $\overline{\mathcal{PR}}$ ~\cite{costantini15}, $\sval{\mbox{\tiny $\overline{\mathcal{PR}}$}}{\code{x}} = \sset{\code{a:=s}}{s \in \cstr}$. When the abstracted string and the abstraction is clear from the context, we simply denote this set by $\cS$ and we assume (for the sake of simplicity) that any string in $\cS$ is an executable language statement\footnote{Note that, this assumption corresponds to a decidable condition, hence it is possible to check it and to implement ad hoc solutions when it does not hold.}. In the following, we abuse notation by denoting $\cS$ also the automaton recognizing the language. \\
%
Consider for example, the  program reported in Fig.~\ref{fig:p-abs-left}, a program building and manipulating the string \code{str} at run-time, which is, afterwards, interpreted as executable code, being the input parameter of the string-to-code statement $\code{eval}$.
Since the value of $\code{N}$ is unknown at compile-time, we cannot predict the precise number of iterations of the \code{while}-loop. In this case, a suitable string abstract analysis would approximate the value of $\code{str}$, before the \code{eval} execution, to an abstract value corresponding to an over-approximation of the possible values for $\code{str}$, which may be also, due to abstraction, an infinite set of strings, and therefore an infinite set of possible programs.
\begin{figure}[t]
	\begin{subfigure}[b]{0.5\textwidth}
		\begin{CenteredBox}
			\begin{lstlisting}[basicstyle=\fontsize{10}{10}\selectfont\ttfamily,escapeinside={(*}{*)}]
			(*$\pp{1}$*)str := "x:=5"; (*$\pp{2}$*)i := 0;
			(*$\pp{3}$*)while (i < N) {
				(*$\pp{4}$*)str := str + "5";
				(*$\pp{5}$*)i:=i+1;(*$\pp{6}$*)
			}
			(*$\pp{7}$*)str := "if(x<5){"  + str 
					+ "}else{x:=1};";
			(*$\pp{8}$*)eval(str)(*$\pp{9}$*)
			\end{lstlisting}
		\end{CenteredBox}
		\caption{}
		\label{fig:p-abs-left}	
	\end{subfigure}
	~
	\begin{subfigure}[b]{0.5\textwidth}
		\centering
		\includegraphics[scale=0.11]{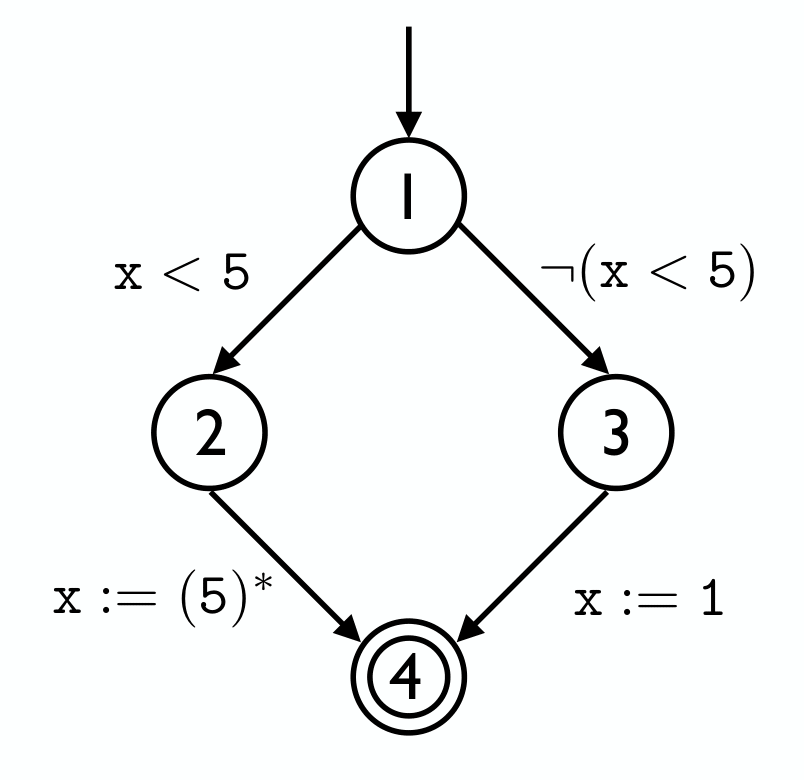}
		\caption{}
		\label{fig:p-abs-right}	
	\end{subfigure}
	\label{fig:p-abs}
	\caption{(a) Dynamically-generating code sample. (b) $\CFG$ associated to \code{str} labeled with abstract expressions.}
\end{figure}
For instance, in the example, if we abstract strings into the regular expression abstract domain~\cite{choi2006} (or equivalently into the finite state automata abstract domain~\cite{arceri2018}), the value of \code{str} after the \code{while} loop will be the abstract value $\mathtt{x :=5(5)^*;}$ corresponding to an infinite set of programs, i.e., \code{x:=5;}, \code{x:=55}, \code{x:=555;}\dots.
In this case, the common practice for analyzing \code{eval} is simply to give up with the analysis, for example by halting the analysis throwing an exception \cite{jensen2012} or forbidding its usage \cite{jsai}.

Let $\rhoimp$ be the abstract domain for all the possible values (integers, strings and booleans) \cite{tops20}. Note that, $\ov{\CAbs}[\rhoimp]$ contains, for integers,  predicates like the ones in the abstract $\CFG$ in Fig.~\ref{fig:asfa}. \\
The analysis $\rhoimp$ at point $\ell_3$, due to widening\footnote{Widening is a fix-point accelerator used in infinite domains with infinite ascending chains, namely where the semantic fix-point computation may diverge. In this case we use a widening on automata defined in \cite{choi2006}} applied in the analysis of the while loop \cite{arceri2018}, abstracts the value of \code{str} in the infinite language $\sset{\code{x:=s}}{s\in (5)^+}$ (namely \code{x} is assigned to any value represented by a finite sequence of $5$). Hence, at point $\ell_8$ the analysis abstracts \code{str} to the strings set  
$
\cS_{\mathtt{str}} = \sset{\code{if(x<5)\{x:=s\}else\{x:=1\}}}{s \in (5)^+}
$ meaning that, the true-branch of the string that may be transformed by $\code{eval}$ may be either $\code{x:=5}$, or $\code{x:=55}$, or $\code{x:=555}, \dots$. The automaton corresponding to the abstract value of $\code{str}$ is reported in Fig.~\ref{fig:fa}, and it denotes an infinite language, i.e., an infinite set of possible statements. 
Unfortunately, this is a problem for the analysis provided in \cite{tops20}, where the language containing all the possible strings would be returned, losing any precision.

\begin{figure}[t]
	\centering
	\includegraphics[scale=0.55]{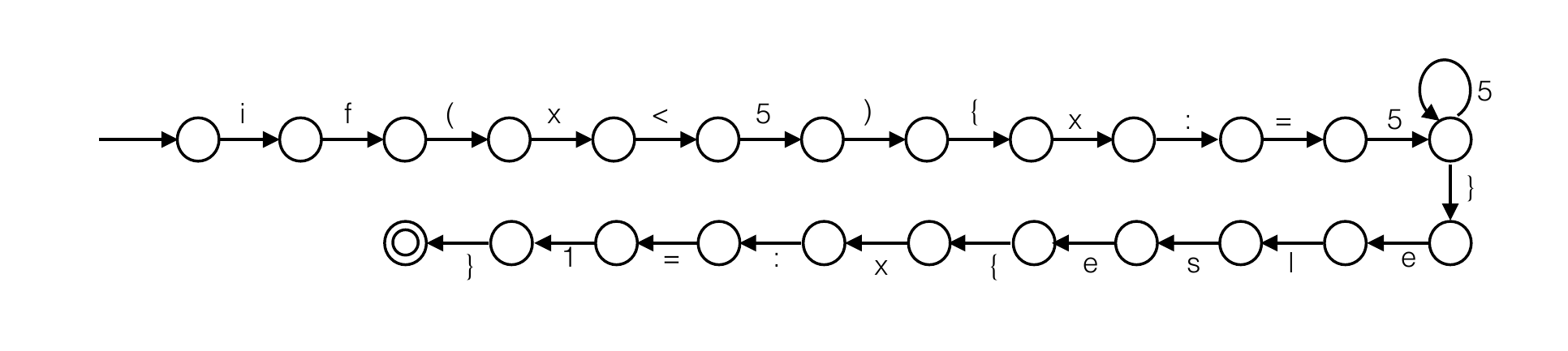}
	\caption{Finite state automaton corresponding to the abstract value of $\code{str}$.}\label{fig:fa}
\end{figure}

\paragraph*{Generating the Code: From Automata to $\CFG$s.}
At this point, we have the (potentially infinite) language of the $\reflect$ argument (and hence an automaton $\cS$), and the goal is to generate a $\CFG$ modeling an over-approximation of the executable code contained in the language of the automaton $\cS$. The idea is to generate a $\CFG$ from a language of strings, i.e., from an automaton, by performing a parsing on the paths of the automaton. Indeed, we have defined and implemented an algorithm\footnote{In the following, we only discuss the main parts of the algorithm for space limitations.}, reported in Alg.~\ref{alg:autsfa}, performing an abstract parser on automata that, given an automaton $\cS$, returns the $\CFG$ $\cP$ that over-approximates, for each $s \in \cS$ (executable), the concrete execution of $\code{eval}$.

The idea of  Alg.~\ref{alg:autsfa} is to perform a depth-first search on the automaton and, when a language statement 
is recognized, to generate an edge in the $\CFG$. This phase is handled by lines 3-13 of Alg.~\ref{alg:autsfa}, building the set of nodes $\nodes$ and the set of edges $\edges$ of the resulting $\CFG$ $\cP$. The set $W$ contains the states of the finite state automaton for which we still have to generate edges in the $\CFG$ and it is initialized, at line 2, with the initial state $q_0$. 
At this point, Alg.~\ref{alg:autsfa} looks for language statements readable from any path of the input automaton starting from a state $q$, taken from $W$, by means of the module $\mathrm{ReduceStmts}$ (line 5). In particular, $\mathrm{ReduceStmts}$  returns a set of triples $(q', \Comm, q'')$, where each returned triple means that from $q'\in Q$ to $q''\in Q$ a language statement $\Comm$ has been recognized. 
\begin{figure}[t]
	\scalebox{0.8}{
	\begin{algorithm}[H]
		\KwData{$\cS = (Q, \alphabet, \delta, q_0, F)$}
		\KwResult{$\CFG$ $\cP$ over-approximating executable strings of $\cS$}
		
		$\cS = \mathrm{ReduceCycles}(\cS)$\;
		$\nodes \leftarrow \varnothing$;  $\edges\leftarrow\varnothing$;
		$W \leftarrow \{q_0\}$;
		$visited \leftarrow \varnothing$;
		
		\While{$W \neq \varnothing$}{
			select and remove $q$ from $W$\;
			$stmts \leftarrow \mathrm{ReduceStmts}(\cS, q)$\;
			\ForEach{$(q', \Comm, q'') \in stmts$}{
				$\nodes \leftarrow \nodes' \cup \{\lab(q'),\lab(q'')\}$\;
				
				$\edges \leftarrow \edges \cup \{(\lab(q'), \Comm, \lab(q''))\}$\;
				
				$visited \leftarrow visited \cup \{q'\}$\;
				$W \leftarrow W \cup \{q''\}$\;
				$W \leftarrow W \smallsetminus visited$\;
			}
		}
		\textbf{return} $\cP = \tuple{\nodes,\edges}$\;
		\caption{}
		\label{alg:autsfa}
	\end{algorithm}
}
\end{figure}
The set returned by $\mathrm{ReduceStmts}$ corresponds to the set of statements of $\cP$ readable from the state $q$, hence they are added to $\edges$, substituting the reached states with the corresponding labels by means of the function $\lab$ (lines 7-8). At this point, we need to look for the statements that can be read from $q''$, hence, $q''$ is added to $W$ in order to be eventually processed at the next iterations of the while loop at lines 3-13. When there are no more states of $\cS$ to be processed, namely when $W$ is empty, the $\CFG$ $\cP = \tuple{\nodes,\edges}$ is returned (line 14), with entry label $\lab(q_0)$ and exit labels the ones associated with the states in $F$.

Problems arise when the automaton contains cycles (namely, when the automaton denotes an infinite language). In this case, Alg.~\ref{alg:autsfa} first transforms, at line 1, the input automaton, over the alphabet $\alphabet$, in an automaton without cycles, over the alphabet $\alphabet \cup \wp(\alphabet^*)$, by means of the module $\mathrm{ReduceCycles}$. 
Given an input automaton $\cS$, we retrieve the cycles of $\cS$ using the well-known Tarjan's algorithm~\cite{tarjan} for identifying cycles. Then, for each detected cycle of $\cS$, we check whether the string read by the cycle is a whole statement $\mathsf{r}$ or not. In the first case, we substitute the cycle of the string $\mathsf{r}$ in the automaton, i.e., $\mathsf{r}^*$, with the automaton reading the string corresponding to the statement \code{while(true)\{} $\mathsf{r}$ \code{\}} over the alphabet $\alphabet$.
Otherwise, if the cycle does not read a whole statement, the idea is to collapse the cycle in a single transition, labeled with the regular expression corresponding to what is read in the cycle, i.e., denoting a set of string on $\alphabet$ ($\wp(\alphabet^*)$). Hence the resulting automaton is on the alphabet $\alphabet \cup \wp(\alphabet^*)$. In Fig.~\ref{fig:freestar} we report an example of application of $\mathrm{ReduceCycles}$ algorithm.
\begin{figure}[t]
	\begin{subfigure}[b]{0.5\textwidth}
		\centering
		\includegraphics[scale=0.55]{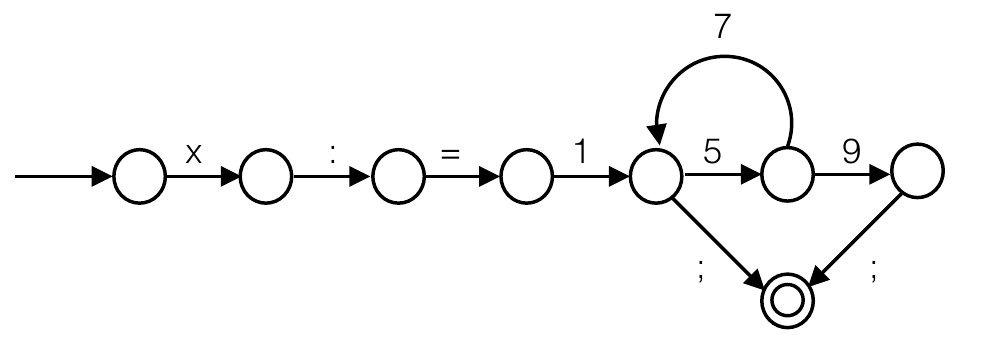}
		\caption{}
		\label{fig:freestar-a}
	\end{subfigure}
	\begin{subfigure}[b]{0.5\textwidth}
		\centering
		\includegraphics[scale=0.27]{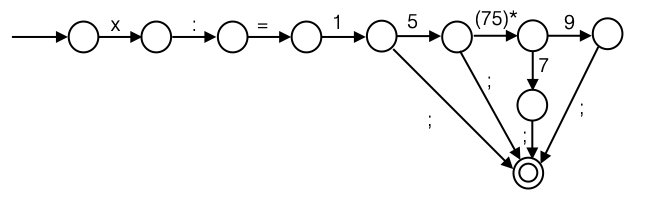}
		\caption{}
		\label{fig:freestar-b}
	\end{subfigure}
	\caption{(a) Finite state automaton with cycle. (b) Result of $\mathrm{ReduceCycles}$.}
	\label{fig:freestar}
\end{figure}
\ccc{\begin{figure}[t]
	\begin{algorithm}[H]
		\KwData{$\cS = (Q, \alphabet, \delta, q_0, F)$}
		\KwResult{Transforms $\cS$ in a cycle-free automaton over the alphabet $\alphabet \cup \wp(\Sigma^*)$}
		
		$\mathsf{Cycles} \leftarrow \mathsf{tarjan}(\cS)$\;
		\ForEach{$\mathsf{(C, i, o)} \in \mathsf{Cycles}$}{
			let $\mathsf{r}$ the string read in the cycle $\mathsf{C}$
			
			\uIf{$\mathsf{r} \in \Ccomms$}{
				let $\cS'$ the automaton (over the alphabet $\alphabet$) recognizing the string $while(true)\{\mathsf{r}\}$\;
				replace $\mathsf{C}$ with $\cS'$ in $\cS$\;
				
			}
			\uElseIf{$\mathsf{i}$ == $\mathsf{o}$ }{
				let $q_f$ be a  new fresh state\; 
				$Q \leftarrow Q \cup \{q_f\}$\;
				
				$t_f \leftarrow (\mathsf{i},  \mathsf{r} , q_f)$\;
				$\delta \leftarrow \delta \cup t_f$\;
				\ForEach{$(\mathsf{i}, c, q) \in \delta$ s.t. $q \notin \mathsf{C}$}{
					$\delta \leftarrow \delta \cup (q_f, c , q)$\;
				}
			}
			\Else{
				
				$t_f \leftarrow (\mathsf{i}, \mathsf{r} , \mathsf{o})$\;
				$\delta \leftarrow \delta \cup \{t_f\}$\;
				\ForEach{$(\mathsf{i}, c, q) \in \delta$ s.t. $q \notin \mathsf{C}$}{
					$\delta \leftarrow \delta \cup (\mathsf{o}, c, q)$\;
				}
			}
		}
		
		$\delta \leftarrow \delta \cup \{ t_f \}\smallsetminus \sset{(q, c, q')}{\exists \mathsf{(C, i, o)} \in \mathsf{Cycles}\:.q, q' \in \mathsf{Cycles} }$\;
		minimize $\cS$\;
		\caption{$\mathrm{ReduceCycles}$ algorithm.}
		\label{alg:reducecycle}
	\end{algorithm}
\end{figure}
In particular, lines 7-21 add a new transition in $\delta$ from the entry state of the cycle $\mathsf{i}$ to the exit state of the cycle $\mathsf{o}$, labeled with the string read by the cycle $\mathsf{r}$. The corner case when the entry state corresponds to the exit state is handled by lines 7-14, adding a fresh state $q_f$ that emulates the exit state. Lines 10-11 adds the transition labeled with $\mathsf{r}$ from the entry state to $q_f$, while lines 11-13 duplicate the outgoing transitions of $\mathsf{i}$, not entering in the cycle, to $q_f$. If the states $\mathsf{i}$ and $\mathsf{o}$ are not equal, we add the transition $(\mathsf{i}, \mathsf{r}, \mathsf{o})$ in $\delta$ (line 17) and we duplicate any outgoing transition of $\mathsf{i}$, not entering in the cycle, also for $\mathsf{o}$. 
Finally, we remove from $\delta$ all the transitions that were involved in a cycles and we minimize the automaton in order to remove potential unreachable states (lines 23-24). 


Now let us explain how $\mathrm{ReduceStmt}$ works. Since Alg.~\ref{alg:autsfa} applies $\mathrm{ReduceCycles}$ on the $\cS$, the method $\mathrm{ReduceStmt}$ works on a cycle-free automaton over alphabet $\alphabet \cup \wp(\alphabet^*)$. As we already mentioned before, the module $\mathrm{ReduceStmt}$ takes as input a cycle-free automaton $\cS$ and a state $q$ and looks in $\cS$ which statements are readable from $q$. It consists of three methods $\mathrm{ReduceWhile}$, $\mathrm{ReduceIf}$ and $\mathrm{ReduceAsg}$, that visit the automaton $\cS$  from $q$ and returns the set of $\code{while}$ statement, $\code{if}$ statement and assignments readable from $q$, respectively. 
}
%
\ccc{
Here, we only report the pseudo-code of the method $\mathrm{ReduceAsg}$, reported in Alg.~\ref{alg:reduceasg}. The other methods can be defined similarly to $\mathrm{ReduceAsg}$. In particular, $\mathrm{ReduceAsg}$ takes as input the cycle-free automaton $\cS$ and a state $q$ and returns as output $\mathsf{asg}$, that is a set of triples $(q, \Comm, q'')$, where each triple means that a state $q''$ is reached by $q$ reading an assignment $\Comm$. For example, let us consider as running example, the automaton $\cS_{\mathsf{asg}}$ reported in Fig.~\ref{img:ex-reduceasg} and suppose we want to search for the readable assignments from the initial state $q_0$\footnote{For space limitations, in the down-most path we have graphically collapsed its transitions. All the symbols of the transitions of the down-most path actually over $\alphabet$ (e.g., the path from $q_0$ to $q_{17}$ is a sequence of transitions reading a single character of the string $\code{if(x<5)\{}$ one by one.) }.  Alg.~\ref{alg:reduceasg} first search for all the states $q'$ reachable reading an identifier from $q_0$ (up to the symbols $:$ and $=$, line 2), relying on the function $\mathrm{ReduceId}$. In our example, $\mathrm{ReduceId}(\cS_{\mathsf{asg}}, q_0)$ returns $\{(\code{x},q_3),(\code{xy}, q_8)\}$ (the path reaching the state $q_{31}$ does not corresponds to an assignment). Then, for each reached state $q'$ the function $\mathrm{ReduceExp}$ returns a set of couples $(\mathsf{e}, q'')$, where each couple means that from $q'$ it is  
possible to reach $q''$ reading a (potentially abstract) expression $\mathsf{e}$ (up to a semi-colon). In our running example, from the states are $q_3$ and $q_8$  we call $\mathrm{ReduceExp}$, returning $\{(\code{7},q_{10}),(\code{55}^*, q_{10})\}$.
At this point, it is possible to create the set of the assignments readable from $q$ (line 6), where, for each reached state $q''$, the readable assignment from $q$ is added to $\mathsf{asg}$ (line 6). In our example, the result of $\mathrm{ReduceAsg}(\cS_{\mathsf{asg}}, q_0)$ is $\{(q_0, \code{x:=55}^*\code{;},q_{10}),(q_0, \code{xy:=7;},q_{10})\}$.

As we have already mentioned before, the method $\mathrm{ReduceStmt}$ search for statements (i.e., \code{if}, \code{while} and assignments) from a given state $q$. If we consider again the automaton $\cS_{\mathsf{asg}}$ in Fig.~\ref{img:ex-reduceasg}, from $q_0$ it is possible also to read an $\code{if}$ statement. In particular, $\mathrm{ReduceStmt}$ will also call the sub-methods $\mathrm{ReduceIf}$ and $\mathrm{ReduceWhile}$ from $q_0$, besides $\mathrm{ReduceAsg}$. Hence, both methods are called from the state $q_0$ and $\mathrm{ReduceIf}$ would return the set 
{\small
$$\{(q_0, \code{x<5},q_{17}),(q_0, \neg\code{(x<5)},q_{27}), (q_17, \code{x:=1},q_{31}), (q_{27}, \code{x:=2},q_{31}), (q_{31}, \code{true},q_{10})\}$$
}
and the method $\mathrm{ReduceWhile}$ returns the empty set since no \code{while} statements are readable from $q_0$.

\begin{figure}[t]
	\centering
	\includegraphics[scale=0.6]{img/ex-reduceasg}
	\caption{Finite state automaton $\cS_{\mathsf{asg}}$.}
	\label{img:ex-reduceasg}
\end{figure}

\begin{algorithm}[H]
	\KwData{$\cS = (Q, \alphabet, \delta, q_0, F)$, $q \in Q$}
	\KwResult{$\sset{(q, \Comm, q'')}{q'' \mbox{ is reachable from } q \mbox{ with an assignment } \Comm}$}
	$\mathsf{asg} \leftarrow \varnothing$\; 
	$\mathsf{ids} \leftarrow \mathrm{ReduceId}(\cS, q)$\;
	
	\ForEach{$(\mathsf{x}, q') \in \mathsf{ids}$}{
		$\mathsf{exps} \leftarrow \mathrm{ReduceExp}(\cS, q')$\;
		\ForEach{$(\mathsf{e}, q'') \in \mathsf{exps}$}{
				$\mathsf{asg} \leftarrow \mathsf{asg} \cup \{(q, \mathsf{x}:= \mathsf{e}, q'')\}$\;		
		}
		
	}
	\textbf{return} $\mathsf{asg}$\;
	\caption{$\mathrm{ReduceAsg}(\cS, q)$}
	\label{alg:reduceasg}
\end{algorithm}

}
%
%
%
As example note that, by applying Alg.~\ref{alg:autsfa} to the automaton for $\cS_{\mathtt{str}}$ in Fig.~\ref{fig:fa}, we generate the $\CFG$ $\cP_{\mathtt{str}}$, depicted in Fig.~\ref{fig:p-abs-right}. 
%
%
It is worth noting that the $\CFG$ obtained so far may contain abstract expressions on edges, hence edges may represent an infinite collection of statements. At this point, we need to approximate these edges for making it possible to analyze the $\CFG$. 

%
%
\paragraph*{Making the Code Analyzable: Abstracting the $\CFG$.}
Let us recall that we have to perform the analysis $\rho$ also on the resulting code, in order to continue the static analysis. Hence, as observed before, we have to combine the code abstraction corresponding to the generated (abstract) $\CFG$ with the code abstraction induced by the semantic abstraction $\rho$, i.e., $\ov{\CAbs}[\rho]$, which models, as code abstraction, the analysis.\\
First of all, we have to formally characterize the abstraction $\eta$ induced by the construction of the $\CFG$ given above, namely we characterize how the construction abstracts together different predicates.
Let us build a code abstraction starting from the $\CFG$ $\cP=\tuple{\nodes,\edges}$ built in Alg.~\ref{alg:autsfa}: In particular, let $\mbox{\sl Merge}\defi\sset{\sset{\varphi\in\Psi}{\tuple{\ell',\varphi,\ell''}\in\edges}}{\ell',\ell''\in \nodes}\subseteq\Psi$ be the set of collections of predicates between any pair of states in the $\CFG$, we define 
\begin{equation}\label{eq:eta-As}
\eta^{\cP}(\wp(\Psi))\defi\wp(\sset{X\in \mbox{\sl Merge}}{\forall Y\in \mbox{\sl Merge}\smallsetminus\{X\}.\:X\cap Y=\varnothing})\in\uco(\wp(\Psi))
\end{equation}
Note that, this abstraction, being characterized starting from the $\CFG$ is defined only in terms of a finite subset of $\Psi$, namely on the predicates in the given $\CFG$, i.e.,  $\Psi^{\cP}\defi\Psi\cap\sset{\varphi}{\tuple{\ell',\varphi,\ell''}\in\edges}$. \\
In the example, $\ok{\Psi^{\cP_{\mathtt{str}}(\wp(\Psi))}=\{\sset{\code{x:=s}}{s\in (5)^+},\{\code{x:=1}\},\{\code{(x<5)}\},\{\neg\code{(x<5)}\}\}}$, hence we have that  $\ok{\eta^{\cP_{\mathtt{str}}}=\wp(\Psi^{\cP_{\mathtt{str}}})}$, being $\ok{\Psi^{\cP_{\mathtt{str}}}}$ already a partition. In Fig.~\ref{fig:eta-eval} this abstraction is partially depicted.\\
\begin{figure}[t]
	\begin{subfigure}{1\textwidth}
		\centering
		\begin{tikzpicture}[scale=0.3]
		
		\node (top) at (0,8) {$\top$};
		\node (e) at (-4,5) {$\{\code{x:=1}\} \cup \sset{\code{x:=s}}{s \in (5)^+}$};
		\node (em) at (7.5,5) {$\vee-$closure};
		\node (d) at (13,2) {$\{\neg\code{(x<5)}\}$};
		\node (c) at (4,2) {$\{\code{(x<5)}\}$};
		\node (b) at (-4,2) {$\sset{\code{x:=s}}{s \in (5)^+}$};
		\node (a) at (-13,2) {$\{\code{x:=1}\}$};
		\node (bot) at (0,-1) {$\bot$};
		
		\draw (b) -- (bot) --(a)--(e) -- (b);
		\draw[loosely dotted] (em) -- (top)--(e);
		\draw[loosely dotted] (em) -- (c);
		\draw (c)--(bot) -- (d);
		\draw[loosely dotted] (d)--(em) -- (b);
		\end{tikzpicture}
		\caption{}
		\label{fig:eta-eval}
	\end{subfigure}
~
	\begin{subfigure}{1\textwidth}
	\centering
	\begin{tikzpicture}[scale=0.3]
	\node (top) at (0,10) {$\top$};
	\node[align=center] (d) at (-11, 7) {$\code{x:=} \mathbb{Z}^+ \vee (\code{x<}\mathbb{Z}^+)$};
	\node[align=center] (e) at (11,7) {$\code{x<}\mathbb{Z}^+ \vee \neg\code{(x<}\mathbb{Z}^+\code{)}$};
	\node[align=center] (f) at (0, 7) {$\code{x:=}\mathbb{Z}^+ \vee \neg\code{(x<}\mathbb{Z}^+\code{)}$};
	
	\node[align=center]  (c) at (7,4) {$\neg\code{(x<}\mathbb{Z}^+\code{)}$};
	\node[align=center] (b) at (0,4) {$\code{(x<}\mathbb{Z}^+\code{)}$};
	\node[align=center] (a) at (-7,4) {$\code{x:=}\mathbb{Z}^+$};
	\node[align=center] (bot) at (0,1) {$\bot$};
	
	\draw  (bot) -- (c) -- (e)-- (top) --(d) -- (b) -- (bot) -- (a) -- (d);
	\draw (top) -- (f) -- (a);
	\draw (c) -- (f);
	\draw (e) -- (b);
	
	\end{tikzpicture}
	\caption{}
	\label{fig:omega-eval}
\end{subfigure}
	\caption{(a) Code abstraction $\eta^{\cP_{\mathtt{str}}}$ w.r.t. the $\CFG$ reported in Fig.~\ref{fig:p-abs-right}, (b) Code abstraction $\ov{\CAbs}[\mathsf{\rhoimp}]^{\cP_{\mathtt{str}}}$ }
\end{figure}
Finally, we need to satisfy Eq.~\ref{eq:comFW} (completeness) between the code abstraction $\eta^{\cP}$, built so far, and the static analysis, modeled as a semantic abstraction $\rho$, performing $\rhos$ (introduced above) on strings.
Clearly we have no guarantee that $\eta^{\cP}$ satisfies Eq.~\ref{eq:comFW}, hence, we have to (further) abstract the $\CFG$ in order to guarantee completeness w.r.t. the performed static analysis, namely in order to make it possible to perform the given static analysis on the code in the generated $\CFG$.
As observed in the previous section, in order to force completeness, we have to combine the desired abstraction $\ok{\eta^{\cP}}$ on predicates, with the code abstraction $\ok{\ov{\CAbs}[\rho]}$. 
Formally, in order to allow this operation, since $\ok{\eta^{\cP}}$ is defined on $\ok{\Psi^{\cP}}$, we have to restrict also $\ok{\ov{\CAbs}[\rho]}$ on $\ok{\Psi^\cP}$ (denoted $\ok{\ov{\CAbs}[\rho]^\cP}$). This abstraction is obtained by intersecting the meaning of each one of its elements (i.e., its concretization) with the set of predicates in the $\CFG$.
In the running example, we have to compute $\ok{\ov{\CAbs}[\mathsf{\rhoimp}]^{\cP_{\mathtt{str}}}}$, which is the code abstraction induced by the $\Sign$ on the predicates in $\ok{\cP_{\mathtt{str}}}$. For instance, all the predicates in  $\sset{\code{x:=s}}{s\in (5)^+}$ and the predicate \code{x:=1} cannot be distinguished when integers are abstracted by observing only their signs, hence the resulting abstraction is depicted in Fig.~\ref{fig:omega-eval}, where the abstract predicate \code{x:=}$\mathbb{Z}^+$ corresponds, in the concrete, to the set of predicates $\sset{\code{x:=s}}{s\in (5)^+}\cup\{\code{x:=1}\}$, while \code{x<}$\mathbb{Z}^+$ and $\neg$\code{(x<}$\mathbb{Z}^+$\code{)} correspond, respectively, to $\{\code{(x<5)}\}$ and to $\{\neg\code{(x<5)}\}$ (all the other elements corresponds to $\bot$).

Finally, we aim at building a code abstraction which can be interpreted by the initial abstract interpreter $\rho$, namely, that satisfies  Eq.~\ref{eq:comFW}. By Th.~\ref{th:omega} such an abstraction is $\ok{\ov{\eta}_\uparrow^\cP=\ov{\CAbs}[\rho]^\cP}\circ \eta^{\cP}$. 
\begin{corollary}
Let $\rho\in\uco(\cval)$ be additive. Then the code abstraction 
$\ok{\ov{\eta}_\uparrow^\cP=\ov{\CAbs}[\rho]^\cP\circ\eta^{\cP}\in\uco(\Psi^\cP)}$ is complete w.r.t.\ the semantic abstraction $\rho$, i.e., it satisfies Eq.~\ref{eq:comFW}.
\end{corollary}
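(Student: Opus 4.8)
The plan is to obtain the corollary as a direct instance of Theorem~\ref{th:omega}, specialised to $\eta := \eta^{\cP}$ and relativised to the finite predicate set $\Psi^{\cP}$. First I would record the two structural prerequisites. That $\eta^{\cP}\in\uco(\wp(\Psi))$ is exactly the content of Eq.~\ref{eq:eta-As}: it is the powerset of a family of pairwise-disjoint blocks of $\mbox{\sl Merge}$, hence a partitioning closure. That $\ov{\CAbs}[\rho]^{\cP}$ is again an additive uco, now on $\wp(\Psi^{\cP})$, follows from the Proposition establishing $\ov{\CAbs}[\rho]\in\uco(\wp(\Psi))$ together with the fact that relativising a closure, by intersecting every concretisation with the fixed subset $\Psi^{\cP}$, preserves extensivity, monotonicity and idempotency. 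Hence $\ov{\eta}_\uparrow^{\cP}=\ov{\CAbs}[\rho]^{\cP}\circ\eta^{\cP}$ is a well-defined element of $\uco(\Psi^{\cP})$, as claimed.

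The heart of the argument is then to replay, on $\Psi^{\cP}$, the three-line chain of equalities used to prove Theorem~\ref{th:omega}. For every $\varphi\in\Psi^{\cP}$ and every $\mem\in\rMem{\rho}$ the inclusion $\interp{\ov{\eta}_\uparrow^{\cP}(\varphi)}\mem\sqsubseteq\interp{\ov{\eta}_\uparrow^{\cP}(\varphi)}^\rho\mem$ is immediate by soundness of the abstract interpretation. For the reverse inclusion I would use idempotency of the uco $\ov{\CAbs}[\rho]^{\cP}$ to rewrite $\ov{\eta}_\uparrow^{\cP}(\varphi)=\ov{\CAbs}[\rho]^{\cP}\bigl(\ov{\eta}_\uparrow^{\cP}(\varphi)\bigr)$, and then apply a restricted form of Lemma~\ref{lemmaImp} to the outer occurrence of $\ov{\CAbs}[\rho]^{\cP}$, which converts the concrete interpretation into the $\rho$-abstract one and closes the equality.

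The step that deserves care, and which I expect to be the real obstacle, is establishing this restricted form of Lemma~\ref{lemmaImp}, namely $\interp{\ov{\CAbs}[\rho]^{\cP}(\Phi)}\mem=\interp{\Phi}^\rho\mem$ for every $\Phi\subseteq\Psi^{\cP}$. The subtlety is that, set-theoretically, $\ov{\CAbs}[\rho]^{\cP}(\Phi)$ is only the intersection with $\Psi^{\cP}$ of the full block $\ov{\CAbs}[\rho](\Phi)$, so reading it as a plain additive lift over that finite set would undershoot $\interp{\Phi}^\rho$; in the running example $\ov{\CAbs}[\rhoimp]^{\cP_{\mathtt{str}}}$ sends $\sset{\code{x:=s}}{s\in (5)^+}$ to the block $\sset{\code{x:=s}}{s\in (5)^+}\cup\{\code{x:=1}\}$, whose concrete values are only $\{1,5,55,\dots\}$, whereas $\interp{\sset{\code{x:=s}}{s\in (5)^+}}^\rho$ yields all of $\mathbb{Z}^+$. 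The resolution I would argue is that the relativisation to $\Psi^{\cP}$ is purely a lattice-theoretic device, meant to put $\eta^{\cP}$ and $\ov{\CAbs}[\rho]^{\cP}$ on a common finite carrier, while each block still denotes the very same abstract edge label (here $\code{x:=}\mathbb{Z}^+$) as the corresponding block of $\ov{\CAbs}[\rho]$. It must therefore be interpreted through $\heta$ and the abstract-label semantics fixed just before Lemma~\ref{lemmaImp}, and not as the additive lift over the finite carrier. Under this reading $\interp{\ov{\CAbs}[\rho]^{\cP}(\Phi)}=\interp{\ov{\CAbs}[\rho](\Phi)}$, so the restricted lemma collapses onto the already-proved Lemma~\ref{lemmaImp}; the chain of equalities then closes and Eq.~\ref{eq:comFW} holds for $\ov{\eta}_\uparrow^{\cP}$. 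I would finally verify that additivity of $\rho$ (hence of $\interp{\cdot}^\rho$) is invoked at exactly the points where Theorem~\ref{th:omega} needs it, since the interpretation of the composed abstraction is itself defined as an additive lift.
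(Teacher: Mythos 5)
Your proof is correct and takes essentially the same route as the paper, which states the corollary as a direct instance of Theorem~\ref{th:omega} with $\eta:=\eta^{\cP}$ and gives no further argument. The one place you go beyond the paper is in flagging that the relativisation $\ov{\CAbs}[\rho]^{\cP}$ must not be read as an additive lift over the finite carrier $\Psi^{\cP}$ --- under that reading the block $\sset{\code{x:=s}}{s\in (5)^+}\cup\{\code{x:=1}\}$ would denote only $\{1,5,55,\dots\}$ and Eq.~\ref{eq:comFW} would fail --- but as denoting the same abstract edge labels as the corresponding blocks of $\ov{\CAbs}[\rho]$, so that Lemma~\ref{lemmaImp} applies unchanged. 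That reading is the one the paper evidently intends (the abstract $\CFG$ of Fig.~\ref{fig:p-abs-abs} is labelled with $\code{x:=}\mathbb{Z}^+$), so your resolution closes a gap in exposition rather than introducing a different argument.
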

\begin{figure}[t]
	\centering
	\includegraphics[scale=0.13]{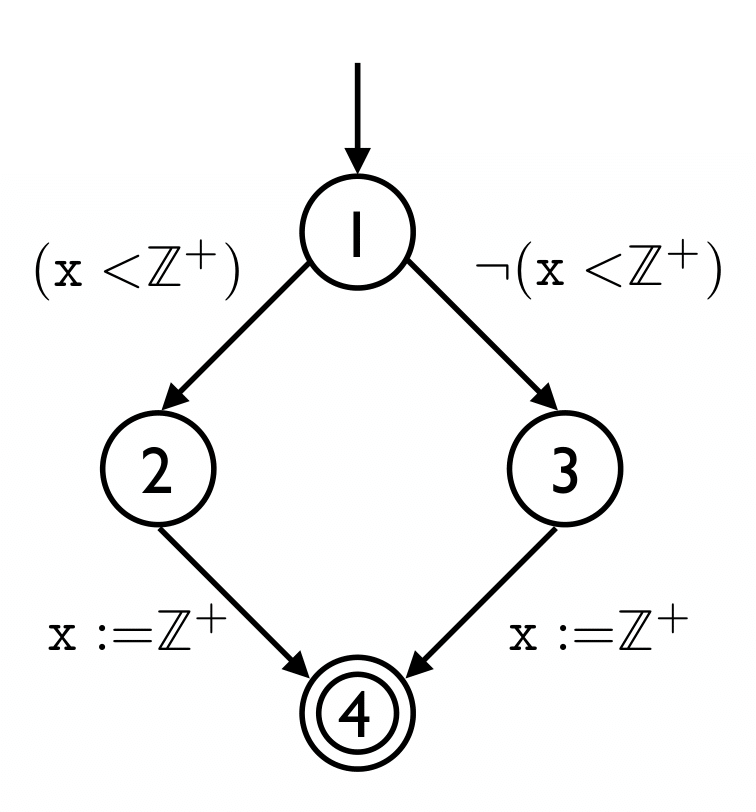}
	\caption{Abstract $\CFG$ generated by abstracting $\cP_{\mathtt{str}}$ by means of $\ov{\eta}_\uparrow^{\cP_{\mathtt{str}}}$}
	\label{fig:p-abs-abs}
\end{figure}

\noindent
Hence, in our example, the code abstraction $\ok{\ov{\eta}_\uparrow^{\cP_{\mathtt{str}}} =  \ov{\CAbs}[\mathsf{\rhoimp}]^{\cP_{\mathtt{str}}}}\circ\eta^{\cP_{\mathtt{str}}}$ satisfies Eq.~\ref{eq:comFW}. 
In particular, we can observe that $\ok{\ov{\eta}_\uparrow^{\cP_{\mathtt{str}}} = \ov{\CAbs}[\mathsf{\rhoimp}]^{\cP_{\mathtt{str}}}}$.
Finally, we have to abstract the $\CFG$ $\cP$, previously generated, by applying 
$\ok{\ov{\eta}_\uparrow^{\cP}}$ to each edge of the $\CFG$. In our example, the so far resulting abstract $\CFG$ is reported in Fig.~\ref{fig:p-abs-abs}, where the abstract $\CFG$ generated by abstracting $\cP_{\mathtt{str}}$ by means of $\ov{\eta}_\uparrow^{\cP_{\mathtt{str}}}$ is depicted.  
%
\paragraph*{A Taste of Implementation.}
A static analyzer based on finite state automata is available at~\cite{arceri2018}. Moreover, we have implemented Alg.~\ref{alg:autsfa} in order to validate our approach\footnote{Available at\\ \code{https://github.com/SPY-Lab/java-fsm-library/tree/abstract-parser}}. The implementation of a static analysis of abstract $\CFG$s is in an early stage development and it is left as future work. Nevertheless, it is able to parse executable automata and to abstract them into abstract $\CFG$s, as we have previously described. In order to make these abstract $\CFG$s effectively analyzable, we are currently extending the static analyzer, and the underlying abstract interpreter,  to parse, and thus analyze, also abstract predicates.
\ccc{
\del{
\begin{example}
Let us consider the JavaScript fragment in Fig.~\ref{fig:concl-left} and the semantic abstraction $\mathsf{Sign}$. The value of \code{str} is manipulated and then transformed into code by \code{eval} statement at the last line. In particular, the set of possible values that \code{str} may have before $\code{eval}$ are $\sset{\mathtt{x=5(5)}^n}{n\in \mathbb{N}}$. Hence, by Eq.~\ref{eq:eta-As} we have that $\sset{\mathtt{x=5(5)}^n}{n\in \mathbb{N}}\in\eta$. But since all the values assigned to \code{x} are with the same sign, the least complete syntactic abstraction is precisely $\Omega(\mathsf{Sign})$. In particular, the abstract predicate approximating the statements is, for instance, $\Omega(\mathsf{Sign})(\{\code{x=5}\}) = \sset{\code{x = n}}{n \in \mathbb{Z}^+}$, namely the predicates that assigns positive values to $\code{x}$. It should be clear that, by changing $\rho$ it may change also the syntactic abstraction. Consider the interval abstract domain $\mathsf{Int}$. In this case, any string in $\sset{\mathtt{x=5(5)}^n}{n\in \mathbb{N}}$ have different abstract semantics, hence $\Omega(\rho)$ does not contain the abstraction $\eta$. In this case, the least complete syntactic abstraction  $\eta_\uparrow$, combining $\eta$ with $\Omega(\rho)$. 
\end{example}
}
}

\section{Conclusion}\label{sec:conl}
We conclude by highlighting the value, in the context of static analysis, of the framework presented in this paper. 
What we propose here is a precision improvement of \cite{tops20}, an analysis that attacks an extremely hard problem in static program analysis by abstract interpretation, since the standard static analysis assumption (i.e., the program code we want to analyze must be static) is broken when we have to deal with string-to-code statements. In \cite{tops20}, we have shown that even without this assumption, it is still possible for static analysis to semantically analyze dynamically mutating code in a meaningful and sound way. It has been the very first proof of concept for a sound static analysis for self-modifying code based on bounded reflection for a high-level script-like programming language. 
In this paper, we improve this approach by characterizing code transformations that do not lose precision w.r.t.\ a fixed abstract semantics/analysis of the code. The idea we develop consists of embedding the property to analyze in the code transformation in order to make the property analysis work also on the transformed code (as it happens in dynamic code analysis). Hence, the main contribution is to make even more precise the first truly {\em dynamic static analyzer\/}, which has the feature to keep the analysis going on, even when code is dynamically built. \\
Clearly, the framework improved here is still at an early stage and surely there is much work to do, not only for the presented algorithm and the implementation, which has clearly to be further developed but also for making the approach more precise and general. 
As far as the algorithm is concerned we have not explicitly provided soundness and completeness proofs or discussions. In particular, completeness holds under decidable hypotheses (the input automaton has to recognize only executable strings), here only briefly treated, and therefore these aspects need further formal development.\\
On the other hand, a direction for improving precision can be that of integrating the proposed static analysis in a hybrid solution, by using, for instance, taint analysis (or other dynamic analyses) for driving when to apply static analysis, or considering more advanced forms of automata-based domains for abstracting strings, such as the one reported in~\cite{negrini21}. Finally, we have considered only \code{eval} as a string-to-code statement, while there are other ways, for dynamically executing code built out of strings, that should be investigated. However, we strongly believe that the same approach used for \code{eval}, could be easily applied to any other string-to-code statement.
Moreover, we believe that this framework could be instantiated in order to deal with other forms of code transformations, maybe by considering more general $\CFG$ abstractions.

From a more theoretical point of view, interesting future works consist of exploiting the proposed approach for analyzing code in order to investigate, on dynamic languages, several application contexts where static analysis by abstract interpretations has been exploited. First of all, we could trace (abstract) flows of information during execution \cite{GiacobazziM18,MastroeniZ17,Mastroeni13,MastroeniN10,GiacobazziM10,GiacobazziM10bis,GM04CSL} in order to tackle different security issues, such as the detection of (abstract) code injections \cite{BuroM18,MB10} or the formal characterization of dynamic code obfuscators and of their potency \cite{JGM12,GiacobazziM12}. Moreover, the ability to analyze malware code could be exploited for extracting code properties which could be used for analyzing code similarity \cite{PredaGLM15bis}, a technique useful for instance to identify or at least classify malicious code.

\bibliographystyle{eptcs}
\bibliography{biblio}


\end{document}